\documentclass[11pt]{article}

\usepackage{amssymb,amsmath,amsfonts,latexsym,bbm,ae,aecompl}
\usepackage{color,graphicx,enumitem,subfigure}

\usepackage{array}
\usepackage{makecell}

\textheight      9in
\textwidth       6.5in
\oddsidemargin   0pt
\evensidemargin  0pt
\topmargin       0pt
\marginparwidth  0pt
\marginparsep    0pt
\headheight      0pt
\headsep         0pt

\newcommand{\cA}{{\mathcal A}}

\newcommand{\cD}{{\mathcal D}}

\newcommand{\cH}{{\mathcal H}}

\newcommand{\cS}{{\mathcal S}}
\newcommand{\cT}{{\mathcal T}}

\newcommand{\pro}{{\mathit P}}
\newcommand{\re}{{\mathit R}}

\newcommand{\WC}{\mathit{WC}}

\newcommand{\hT}{{\mathit TO}}

\newcommand{\qed}{\hfill $\square$ \smallbreak}
\newenvironment{proof}{\noindent{\bf Proof:}}{\qed}

\newtheorem{theorem}{Theorem}
\newtheorem{lemma}{Lemma}
\newtheorem{definition}{Definition}

\newcommand{\ignore}[1]{}

\newlength{\pagewidth}
\setlength{\pagewidth}{\textwidth}
\addtolength{\pagewidth}{-6em}

\begin{document}

\baselineskip 	3ex
\parskip 		1ex

\title{Universal Routing in Multi-hop Radio Networks\thanks{A preliminary version of this paper was published in~\cite{DBLP:conf/dialm/ChlebusCK14}} \vfill}

\author{	Bogdan S. Chlebus\,\footnotemark[1] 
		\and
		Vicent Cholvi\,\footnotemark[2]
		\and
		Dariusz R. Kowalski\footnotemark[3]}

\footnotetext[1]{
		Department of Computer Science and Engineering,
             	University of Colorado Denver,
		Denver, Colorado, USA. 
		Supported by the NSF Grant 1016847.}

\footnotetext[2]{
		Department of Computer Science, 
		Universitat Jaume I, 
		Castell\'on, Spain.
		Supported by the CICYT Grant TIN2011-28347-C02-01.}
			
\footnotetext[3]{
		Department of Computer Science,
            	University of Liverpool,
            	Liverpool, UK. 
		Supported by the EPSRC grant EP/G023018/1. \vfill}

\date{\today}

\maketitle

\vfill

\begin{abstract}
In this article we introduce a new model to study stability in multi-hop wireless networks in the framework of adversarial queueing. In such a model, a routing protocol consists of three components: a transmission policy, a scheduling policy to select the packet to transmit form a set of packets parked at a node, and a hearing control mechanism to coordinate transmissions with scheduling.  For such a setting, we propose a definition of universal stability that takes into account not only the scheduling policies (as in the standard wireline adversarial model), but also the transmission policies.

First, we show that any scheduling policy that is unstable in the classical wireline adversarial model remains unstable in the multi-hop radio network model, even in scenarios free of interferences. Then, we show that both SIS and LIS (two well-known universally stable scheduling policies in the wireline adversarial model) remain stable in the multi-hop radio network model, provided a proactive hearing control is used. In contrast, such scheduling policies turn out to be unstable when using a reactive hearing control. However, the scheduling policy LIS can be enforced to be universally stable provided ties are resolved in a permanent manner. Such a situation doesn?t hold in the case of SIS, which remains unstable regardless of how ties are resolved. Furthermore, for some transmission policies which we call regular, we also show that all scheduling policies that are universally stable when using proactive hearing control (which include SIS and LIS) remain universally stable when using reactive hearing control.

\vfill

\noindent
\textbf{Key words:}
radio network,
packet routing, 
adversarial queuing,
universal stability,
scheduling,
transmission policy,
hearing control.

\end{abstract}

\section{Introduction}

Wireless data communication involves multiple technologies  interacting with each other.
In order to study communication algorithms for wireless networks, one needs models that abstract from incidental details and capture the essential aspects of wireless networking.

A node of a wireless network can transmit messages within its transmission range. 
Such a range is determined by the power of the transmitting device and the surrounding topography. 
A possible approach to model this is through geometric wireless networks in which ranges are determined by distances assigned to nodes.
A popular special case of geometric networks has all the ranges equal, so that the topologies of such networks are unit-disk graphs; see~\cite{EmekGKPPS09,KuhnWZ-WN08,Urrutia02}.
Another alternative is a signal-to-interference-plus-noise ratio (SINR) model which incorporates interference and noise explicitly in determining the ranges of reliable transmissions.
In a SINR setting, a transmission is successful when a suitable ratio of ``good'' to ``bad'' components of a received signal is above a  threshold; see~\cite{DaumGKN13,JurdzinskiKRS-DISC13,JurdzinskiKS13,RickenbachWZ09}.

Radio data networks model wireless communication in which just one channel is used for transmissions.
Signal receptions at a node that overlap in time interfere with one another, so that none can be received successfully.
The feasibility of a node-to-node direct transmissions determines how nodes can reach each other; 
reachability, so defined, is a relation on the nodes of a network.
This suggests modeling the topology of a multi-hop network as an arbitrary connected graph where edges represent direct reachability; see~\cite{ChlebusKP-OPODIS12,ChlebusKPR-ICALP11}.

The model used in this paper abstract from geometrical constraints imposed on ranges of transmissions and represents the relation of reachability as a general graph.
We consider simple graphs, that is, with symmetric bi-directional edges; see~\cite{KowalskiP-DC05}.
This is to make a dialog feasible between pairs of nodes that can communicate by direct transmissions.

Dynamic store-and-forward routing in wireless networks differs from the respective routing in wireline networks.
On wireline networks, a \emph{scheduling policy}, that manages queues of packets at nodes, is the only essential component of source routing when packets have their routes determined from the point of injection.
This is because a node can transmit a packet per round over any outgoing link, and simultaneously accept a packet per round over any incoming link, these packets coming and going simultaneously.
In wireless networks, coordinating timings of transmissions among nodes, with the goal to avoid collisions resulting from receiving overlapping transmissions, has the potential to improve performance of routing.
Such coordination is handled by a \emph{transmission policy}, which is another essential component of a routing protocol.
These two policies need to cooperate with each other, which is delegated to a \emph{hearing control mechanism}.

The structuring of routing protocols we consider can be related to the OSI model.
Protocol stacks separate subtasks of a communication algorithm into layers of their individual functionalities.
Coordinating message transmissions between nodes has the purpose to avoid collisions of messages, so it can be considered to belong to the medium-access control sublayer of the data-link layer.
A mechanism of exchanging control messages to provide hearing control belongs to the logical-link control sublayer of the data-link layer.
A scheduling policy can be considered as operating at the network layer.

Cross-layer approach proposes to relax these functionality specifications to enhance efficiency.
Such a relaxation is accomplished by providing additional interactions between layers.
See~\cite{FoukalasGA08,Jurdak07,LinSS06,ShakkottaiRK03} for more about the motivation and guidelines in developing cross-layered algorithms.

\paragraph{Related work.}
Adversarial queuing in wireline networks, as a methodology to study stability in the worst case abstracting from stochastic assumptions on traffic generation, was initiated by Borodin et al.~\cite{BorodinKRSW01} and Andrews et al.~\cite{AndrewsAFLLK01}. Such methodology, which is also referred as \emph{Adversarial Queuing Theory} ($AQT$), considers the time evolution of a packet-routing network as a game between a malicious \emph{adversary} that has the power to perform a number of actions (such as injecting packets at particular nodes, choosing their destination, routing them, etc.) and the underlying system.  That adversary, based on its knowledge of the behavior of the system, can devise the scenario that maximizes the ``stress'' on the system. They also introduced the notions of universal stability of protocols and networks in that setting, defined as the property of keeping the amount of traffic  in a system always bounded over time.

Since then, much research has been carried out to gain an understanding of the factors that affect the stability of packet-switched networks (see~\cite{CholviE07}). A systematic account of issues related to universal stability in adversarial routing  was given by {\`A}lvarez et al.~\cite{AlvarezBS-SICOMP04}. 
In~\cite{DBLP:journals/join/BorodinOR04}, Borodin et al. consider a scenario in which links can have \emph{slowdowns} in the transmission of packets, and variations in link capacities (but not both). Networks with nodes and links that occasionally fail were studied by {\`A}lvarez et al.~\cite{DBLP:journals/tcs/AlvarezBS11}. Networks with bandwidth and delay parameters associated with links were considered in Blesa et al.~\cite{BlesaCFLMSST09} and Borodin et al.~\cite{BorodinOR04}. Such behavior can be considered as capturing some properties of wireless networks.

Routing in radio networks was considered as early as in Gitman et al. \cite{GitmanSF76}. In the context of the adversarial queueing, Chlebus et al.~\cite{Chlebus:2012:AQM:2071379.2071384} have some stability results in the multiple access channel. Stability was studied in general wireless networks without interferences by Andrews and Zhang \cite{AndrewsZ-JACM05,AndrewsZ07}, and by Cholvi and Kowalski~\cite{CholviK10}. Andrews et al.~\cite{DBLP:journals/ton/LimJA14} analyzed the stability of the max-weight protocol in wireless networks in scenarios with interferences, but assuming the existence of a set of feasible edge rate vectors sufficient to keep the network stable.

\paragraph{Our results.}
In this article we study dynamic routing in multi-hop radio networks in the framework of adversarial queueing. 

We consider cross-layer interactions of the following three components of routing protocols: a \emph{transmission policy} for medium-access control, a \emph{scheduling policy} to select the packet to transmit form a set of packets parked at a node, and \emph{hearing control mechanisms} to coordinate transmissions with scheduling. Furthermore, the injection of packets is delegated to an adversary  that specifies for each packet its complete itinerary. A difference between our radio adversarial model and the one addressed by Andrews et al.~\cite{DBLP:journals/ton/LimJA14}  is that we incorporate collisions into the model.

For such a setting, we propose a definition of universal stability that takes into account not only the scheduling policies (as in the standard wireline adversarial model), but also how transmission policies are coordinated with scheduling.

Our first result is concerned with instability. We show that, any scheduling policy that is unstable in the classical wireline adversarial model~\cite{AndrewsAFLLK01,BorodinKRSW01} remains unstable in the multi-hop radio network model, even in scenarios free of interferences.

From the point of view of stability, we mainly focus on SIS and LIS, which are two well-known universally stable scheduling policies in the wireline adversarial model~\cite{AndrewsAFLLK01}. Such policies exemplify how the stability of a routing protocol can be affected by both the \emph{hearing control} and the \emph{transmission policies} (concepts that are formally introduced in the next section). We show that both SIS and LIS remain stable  in the multi-hop radio network model, provided  a \emph{proactive} hearing control is used. In contrast, such scheduling policies turn out to be unstable when using a \emph{reactive} hearing control. However, the scheduling policy LIS can be enforced to be universally stable provided ties are resolved in a \emph{permanent} manner. Such a situation doesn't hold in the case of SIS, which remains unstable regardless of how ties are resolved. Furthermore, for some transmission policies which we call \emph{regular}, we also show that all scheduling policies that are universally stable when using  proactive hearing control (which include SIS and LIS) remain universally stable when using reactive hearing control.

The rest of the paper is structured as follows. In Section~\ref{sec:preliminaries} we introduce the formal model for multi-hop radio networks. In Section~\ref{sec:general} it is presented a result that is valid in all the considered scenarios, even if they are free of interferences. Sections~\ref{sec:proactive},~\ref{sec:reactive} and~\ref{sec:regular} present some results that show how the stability of a routing protocol is affected by its components (i.e., the scheduling policy, the transmission policy and the hearing control). Finally, the paper ends in Section~\ref{sec:conclusions} with some conclusions.

\section{Technical Preliminaries}
\label{sec:preliminaries}

This section is divided in two parts. In Section~\ref{subsec:model} we introduce our \emph{multi-hop radio network model}. In Section~\ref{subsec:stability} we define stability in the context of the presented model.

\subsection{Model}
\label{subsec:model}

A network is modeled as a simple graph $G=(V,E)$, where $V$ is the set of nodes and $E$ is a set of edges.
A node of the graph represents a  transceiver that can act both as a sender and as a receiver.
An edge~$(u,w)$ represents the property that the nodes~$u$ and~$w$ can transmit directly to each other, in the sense that there are two independent directed links from~$u$ to~$w$ and from~$w$ to~$u$ available for direct transmissions.

There are $n$ nodes in the network.
Each node is assigned a unique name, which is an integer in $[0,n-1]$.
Every node knows $n$ and its own name, in the sense that they can be used as a part of code of protocols.

Nodes have access to local clocks ticking at the same rate.
Time is divided into time intervals of fixed length that we call \emph{rounds}.
Local computations at a node are considered to be of negligible duration.

\subsubsection{Messages}

The contents of transmissions are structured into chunks of data that we call \emph{messages}.
Messages are of two kinds: packets and control messages. 
A \emph{packet} carries a header that includes a destination address followed by this packet's content.
A \emph{control message} carries a string of bits used to coordinate actions among the nodes.
Control messages are significantly shorter than packet ones.
We assume that transmitting a control message takes an insignificant amount of time compared to what is needed to transmit a packet of data.
Rounds are scaled to the amount of time it takes to transmit a message with a packet.
Packets and control messages are interleaved in executions of routing protocols.

\subsubsection{Data transmissions}

A node may transmit exactly one message in a round or pause in this round.
A message received successfully by a node is said to be \emph{heard} by that node.

Radio networks are defined by the following two properties.
First, when two messages arrive at a node $v$ transmitted by its neighbor such that their receipt overlaps in time, then they interfere with each other and none can be heard by~$v$.
Second, when only one neighbor of a node~$v$ transmits a message then $v$ hears this message.

Radio networks are \emph{single port} in the sense that a node can transmit at most one message in a round and hear at most one message in a round.

\subsubsection{Routing}

Packets get injected into nodes to be delivered to their respective destination nodes by traversing paths.
We consider \emph{source routing} in which the entire path of a packet is known at the source where the packet is injected. 
When a packet traverses a link from~$v$ to~$w$ on such a path, by way of $v$ transmitting the packet and $w$ hearing it, then $w$ is the \emph{intended recipient} of the packet transmitted by $v$.
A packet message carries  a header which includes the packet's itinerary.
This allows for a node that hears a packet message to decide how to process the message.
The following are three cases of what occurs when a node~$v$ hears a message.
One case is when $v$ is not the intended recipient of the message: then $v$ discards the message.
The other occurs when $v$ is the final destination for the message: then $v$ consumes the message. 
The final case is when $v$ is the message's  recipient, but not its final destination: then $v$ enqueues the message to wait to be forwarded to a neighbor.

\subsubsection{Routing protocols}

A \emph{routing protocol} manages how packets traverse their respective assigned paths.
We consider distributed routing protocols, in which each node runs its code independently. 
Packets to be forwarded to a node's neighbors may need to wait for their turn to be transmitted.
Each node contains a buffer space to temporarily store data to be transmitted in the future, which is organized as a queue. 
Each node maintains a single queue, rather than a dedicated queue for each outgoing link, as in the wireline adversarial queuing model. The protocols we consider operate under the principle that no packet is discarded until its delivery to the destination.
To make this meaningful, we assume that the buffer space at a node can store an arbitrarily large number of packets, although we want to keep bounded queues at nodes.

Routing protocols make use of three components: the \emph{scheduler}, the \emph{transmission policy} and the \emph{hearing control}: 

\begin{enumerate}
\item
\emph{Scheduler}: a scheduling policy is understood as a rule to select a packet to transmit from a group of packets at a node. When the packet is eventually heard by the corresponding neighbor, then the sending node removes this packet from its queue.

Popular queueing policies include FIFO (First-In-First-Out), Furthest-From-Source (FFS), Furthest-To-Go (FTG), Nearest-To-Source (NTS), Nearest-To-Go  (NTG), Shortest-In-System  (SIS), and Long\-est-In-System  (LIS).

\item
\emph{Transmission policy}: a transmission policy  indicates to a node whether to transmit in a round or rather to pause.
Such an indication is either in the affirmative, meaning ``do transmit in this round,'' or in the negative, meaning ``do not transmit in this round.'' 

It must be taken into account that some transmissions, during an execution of a routing protocol in a radio network, may not be successful in transferring packets to their intended recipients, due to collisions. To mitigate this adverse effect, some nodes may be indicated by the transmission policy to pause in a round to create more room in the only transmitting frequency available for the neighbors to possibly transmit and hear successfully. Therefore, the ultimate goal of a transmission policy is to facilitate packet movement by avoiding collisions of packets transmitted by different neighbors of nodes. 

We abstract from implementing transmission policies, but delegate such a task to a \emph{transmission oracle} (or simply \emph{oracle}) that will indicate each node whether or not to transmit in a round. This approach allows to abstract from specific transmission policies to consider the qualities of scheduling policies independently of transmission policies.

The relevant, desirable functionality such transmission oracles need to provide is that \emph{each link can successfully transmit within a bounded number of rounds, provided there is a packet ready to be transmitted}.

We say that an integer $h_e>0$ is a \emph{hearing latency of link~$e$} if link~$e$ is guaranteed to be able to successfully transmit at least one packet each $h_e$ consecutive rounds; if such a number does not exist then $e$ is said to have an \emph{unbounded link hearing latency}, denoted $h_e=\infty$. 
We say that a transmission oracle \emph{provides a link latency~$h$} when hearing latencies of all links are upper bounded by the number~$h$.
We denote by $\hT_h$ the class of transmission oracles that provide a link hearing latency of~$h$.

In the Appendix~\ref{sec:transmissions} we show how some of the above mentioned oracles can be implemented by using specific transmission policies whose code is imbedded in the code of a routing protocol, which we refer as \emph{implemented transmissions}.

\item
\emph{Hearing control}: what we call hearing control is a mechanism that, in coordination with the scheduling and the transmission policies, is responsible for performing the transmissions of messages. We consider two different possibilities:

\begin{itemize}
\item
\emph{Proactive hearing control} (denoted $\pro$): when a node wants to transmit in a round, the hearing control first obtains a list of neighbors that will hear the message in the round.  In such an arrangement, the scheduler selects a packet from those parked in the queue that have one of these neighbors on their paths to traverse, which is finally transmitted. This hearing control allows to avoid unexpected collisions, but its implementation requires a handshaking which adds a significant overhead (see for instance~\cite{Karn90}). 

\item
\emph{Reactive hearing control} (denoted $\re$): when a node wants to transmit in a round, it does. And immediately after the transmission, a mechanism is invoked (by the hearing control) to detect if the intended recipient node has heard the message. This means that a scheduler learns about the effectiveness of selection after a transmission. A given packet must be retransmitted until it is eventually heard by the corresponding neighbor.  Whereas now transmissions may suffer from collisions, it has the advantage of requiring an overhead much smaller than in the previous case.

\end{itemize}

Figure~\ref{fig:hearing_control} illustrates the differences between the considered hearing mechanisms, and shows how they interact with the transmission and the scheduling policies. In the context of WIFI networks, both $\pro$ and $\re$ correspond to the standard 802.11b, which is implemented by using the Carrier-Sense Multiple Access (CSMA) protocol: when considering $\pro$ the CSMA protocol is enhanced with a Request-to-Send/Clear-to-Send (RTS/CTS) mechanism, and when considering $\re$ the CSMA protocol uses a collision detection mechanism~\cite{Karn90}. 

\end{enumerate}

\begin{figure*}[t]
\begin{center}
\subfigure[Proactive hearing control. 1: the transmission policy indicates that the node can transmit at this round. 2: the hearing control transmits a control message to collect available nodes. 3: nodes that hear the message response to the sender. 4: the hearing control asks the scheduler to select a packet from the set of responding nodes. 5: the scheduler selects a packet (if any), passes it to the hearing control and removes it from the queue of packets. 6: the hearing control transmits the packet.]{\includegraphics[scale=0.23]{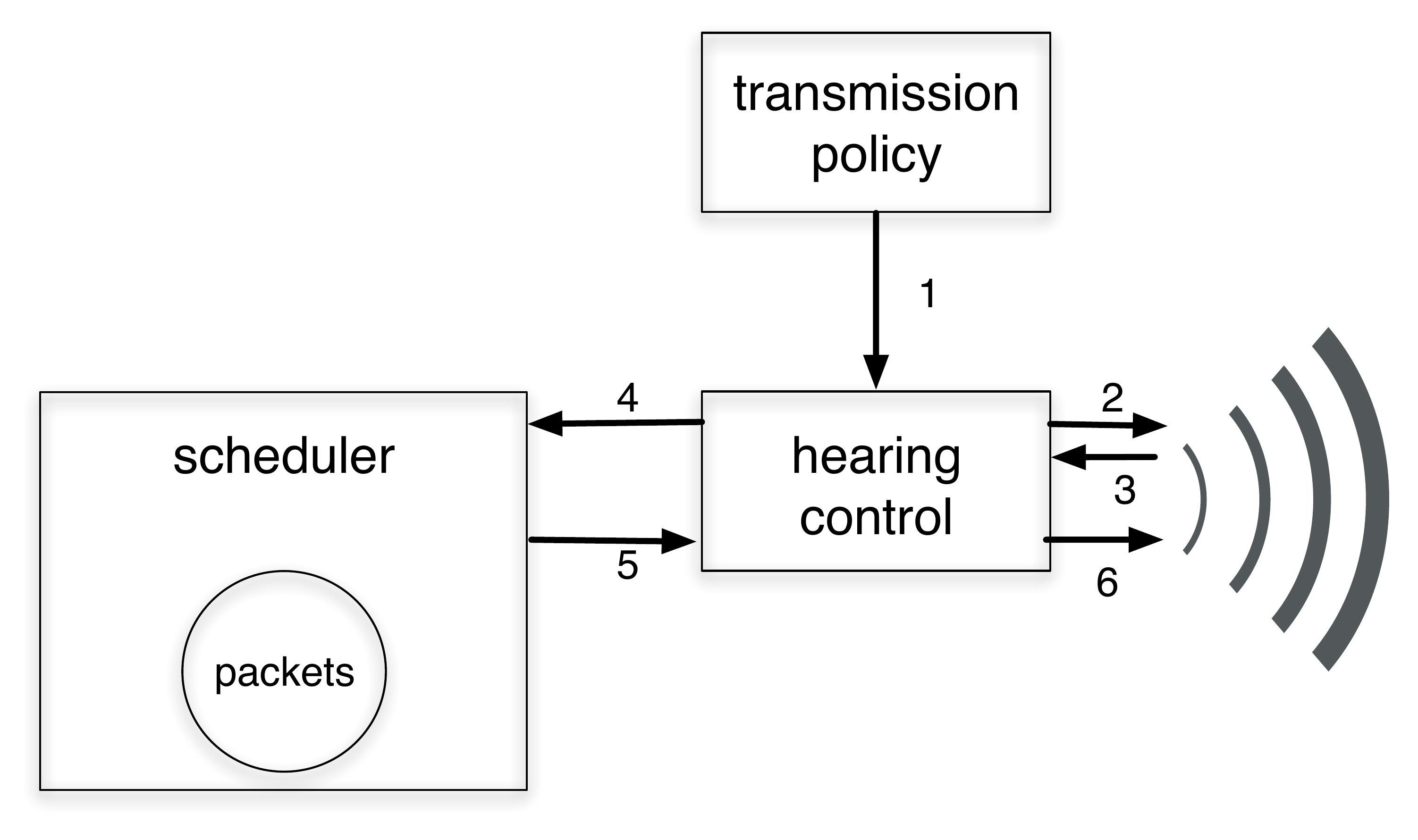}}
\label{fig:pre}
\hspace*{0cm}
\subfigure[Reactive hearing control. 1: the transmission policy indicates that the node can transmit at this round. 2: the hearing control asks the scheduler to select a packet. 3: the scheduler selects a packet (if any) and passes it (a copy) to the hearing control. 4: the hearing control transmits the packet. 5: the hearing control receives an acknowledgement from the physical layer. 6: the hearing control communicates to the scheduler that the packet has been successfully transmitted and the original packet is removed from the queue of packets; otherwise, the original packet is maintained in the queue of packets.]{\includegraphics[scale=0.23]{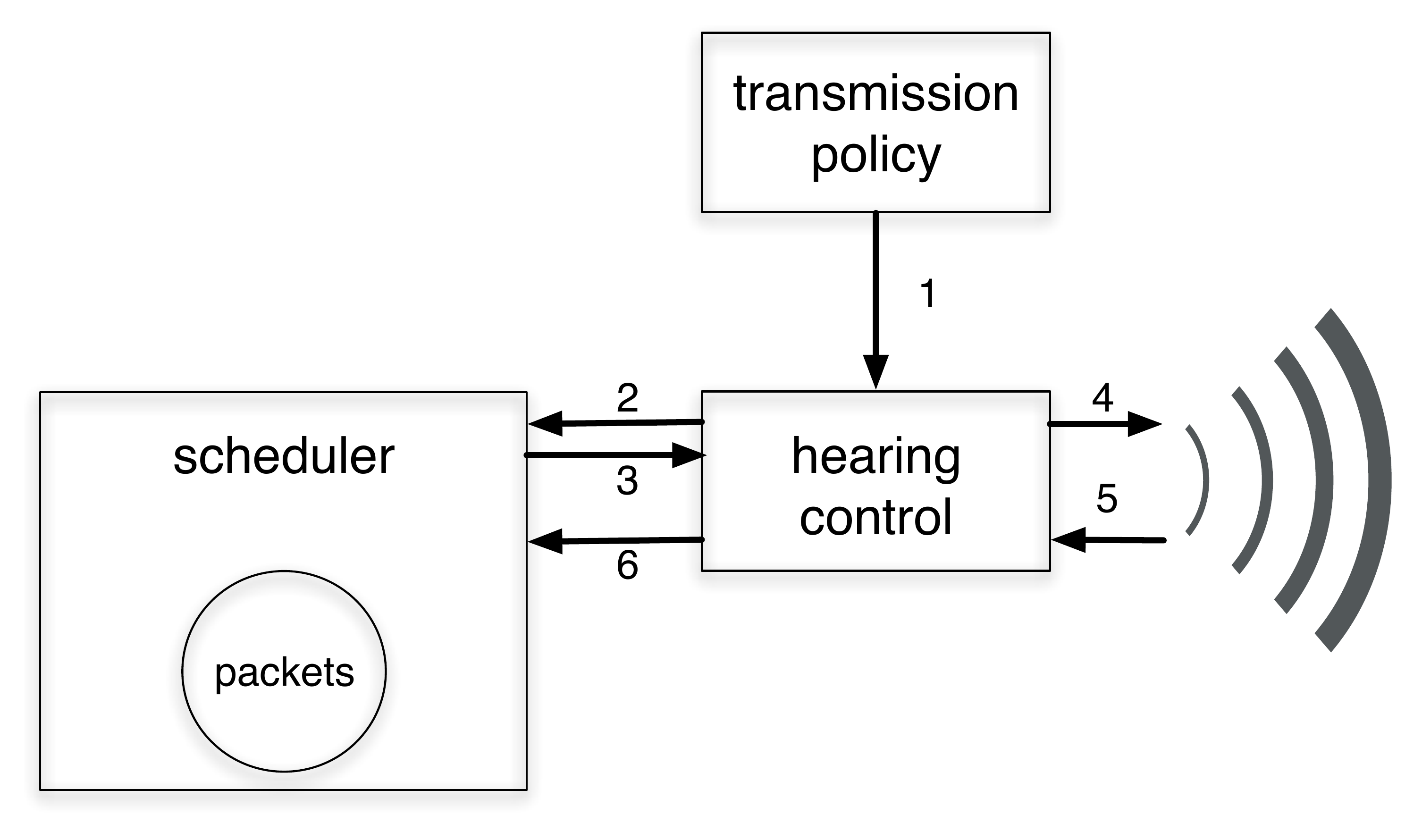}}
\label{fig:post}
\caption{Scheme of the routing protocol at a given node using proactive and reactive hearing control.}
\label{fig:hearing_control}
\end{center} 
\end{figure*}

\subsection{Network Stability}
\label{subsec:stability}


Packets are injected by adversaries. 
An adversary is determined by a pair of numbers $(b,r)$, called the \emph{type} of the adversary, where \emph{burstiness~$b$} is a positive integer and  \emph{injection rate~$r$} satisfies $0 \leq r \leq 1$.  
We denote by $\cA(b,r)$ an adversary of type~$(b,r)$. 
Such an adversary specifies for each injected packet its complete itinerary.
Let $I(\tau,v)$ represent the number of packets that the adversary injects during time interval $\tau$ and has node~$v$ on its path. 
Adversary $\cA(b,r)$ is constrained such that the inequality 
\[
I(\tau,v) \leq r \; \cdot \mid \tau \mid + \; b
\] 
holds for any $\tau$ and~$v$.
When traffic demands are constrained this way, then we say that they are \emph{admissible for rate~$r$ and burstiness~$b$}.

A routing protocol based on a transmission oracle $\cT$, a scheduling policy~$\cS$ and a hearing control $\cH$ (which can be either $\pro$ or $\re$) is denoted by~$Prot_{\cH}(\cT,\cS)$. 
Let there be given a routing protocol $Prot_{\cH}(\cT,\cS)$ and adversary~$\cA$, and let~$\cD$ be an execution of protocol~$Prot_{\cH}(\cT,\cS)$ against~$\cA$ in a network~$G$. For a positive integer $t$, let $Q_{\cD}(t)$ be the number of packets simultaneously queued in all the nodes in round~$t$ of~$\cD$. 
An execution $\cD$ of a routing protocol is \emph{stable} when  the numbers~$Q_{\cD}(t)$ are all bounded. 
Protocol $Prot_{\cH}(\cT,\cS)$ is \emph{stable  against adversary~$\cA$} if  each  execution of~$Prot_{\cH}(\cT,\cS)$ against~$\cA$ in any network $G$ is stable. Finally, $\cS$ with $\hT_h$ and $\cH$ is \emph{stable against adversary $\cA$} if for any $\cT \in \hT_h$,  protocol $Prot_{\cH}(\cT,\cS)$ is stable against adversary $\cA$.


Given any  transmission providing a link hearing latency $h$, the maximum  injection rate one could expect to guarantee stability is $1/h$. 
Otherwise, instability can be created just by injecting packets passing through a link whose hearing latency is exactly $h$,  at a rate higher than $1/h$.  We say that $\cS$ with $\hT_h$ and $\cH$ is \emph{universally stable} when it is stable for any adversarial injecting rate that is less than or equal to~$1/h$.
A scheduling policy that is not universally stable is called \emph{unstable}.

\section{A general result}
\label{sec:general}

Adversarial queuing was proposed as a methodology to analyze stability in wireline networks~\cite{AndrewsAFLLK01,BorodinKRSW01}. In these and subsequent studies (e.g.,~\cite{AlvarezBS-SICOMP04,AlvarezBDSF05,BorodinOR04,AndrewsFGZ05,CholviE07,EchagueCF03}) it was shown that not all the scheduling policies are universally stable in that setting. We therefore have a natural question: are they universally stable in the wireless model proposed in this paper? In this section we show that the answer to this question is negative. More formally, we show that any scheduling policy that is unstable in the classical wireline adversarial model remains unstable in the multi-hop radio network model, even in scenarios free of interferences. 
 
 Remember that in the multi-hop radio network model, each node maintains a single queue, rather than a dedicated queue for each outgoing link, as in the wireline adversarial queuing model.

 \begin{definition}
Given a network $G$ in a wired scenario, we define its \emph{equivalent network~$G^{\equiv}$} in a wireless scenario as follows: 

(1)~For each link $e$ in~$G$, create a node in $G^{\equiv}$, denoted $v^e$. 

(2)~For each pair of links $e=(-,u)$ and $f=(u,-)$ in $G$, connect $v^e$ to  $v^f$ in~$G^{\equiv}$.
\end{definition}

Observe that, for each queue in $G$, there is a unique queue in $G^{\equiv}$ (and viceversa). The queues in $e$ and in $v^e$ are called \emph{equivalent}, denoted $q$ and $q^{\equiv}$.

In this scenario, let us consider a \emph{work-conserving} transmission oracle (denoted $\WC$) in which every node can transmit in each round using any link, provided there are packets ready to be transmitted. Since there are no interferences, both hearing control mechanisms behave in the same manner (i.e., $\pro$ behaves as $\re$).

\begin{definition}
Given an arbitrary execution $\cD$ in the wired system defined by $(G,\cA,\cS)$, we define the \emph{equivalent} execution $\cD^{\equiv}$ of $Prot_{\cH}(\WC,\cS)$ against adversary $\cA^{\equiv}$ in network $G^{\equiv}$ as follows: for each packet $p$ injected by~$\cA$ at some round, $\cA^{\equiv}$ injects another packet $p^{\equiv}$ at the same round following the path (which we call the $p$-equivalent path) formed by replacing each queue followed by packet $p$ by its equivalent queue. Packet $p^{\equiv}$ may be absorbed at any node pointed by its last traversed queue.
\end{definition}

\begin{lemma}
\label{obs:gT}
The round when an arbitrary packet $p$ is in $q$ in $\cD$ is the same as the round when $p^{\equiv}$ is in $q^{\equiv}$ in $\cD^{\equiv}$.
\end{lemma}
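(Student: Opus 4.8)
The plan is to prove a slightly stronger invariant by induction on the round number $t$: at every round $t$ and for every queue $q$ of $G$, the set of packets sitting in $q$ at round $t$ of $\cD$ equals, under the bijection $p\mapsto p^{\equiv}$, the set of packets sitting in the equivalent queue $q^{\equiv}$ at round $t$ of $\cD^{\equiv}$. The lemma is then the special case of this invariant applied to a single packet. The whole argument rests on the structural fact that the construction of $G^{\equiv}$ turns each link $e$ of $G$, together with its dedicated queue, into a single node $v^e$ carrying a single queue, and that two nodes $v^e,v^f$ are adjacent in $G^{\equiv}$ exactly when $e$ feeds into $f$ in $G$; hence a one-step hop of a packet across a link in $G$ corresponds to a one-step hop of its image across an edge of $G^{\equiv}$.

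For the base case I would observe that all queues are empty before any injection, and that, by the definition of $\cA^{\equiv}$, every packet $p$ enters its first queue in $\cD$ at precisely the round its image $p^{\equiv}$ is injected into the equivalent first queue in $\cD^{\equiv}$. For the inductive step I would assume the invariant at round $t$ and track the only two events that modify queue contents between rounds $t$ and $t+1$, namely injections and transmissions. Injections coincide by the definition of $\cA^{\equiv}$. For transmissions, I would argue that, because the queue contents agree at round $t$ and the scheduling policy $\cS$ is the same on both sides -- and because $p$ and $p^{\equiv}$ carry the same injection time, so that time-based policies such as SIS and LIS make identical choices -- the scheduler selects the image packet at $v^e$ in $\cD^{\equiv}$ exactly when it selects $p$ for link $e$ in $\cD$. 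The packet $p$ then moves from $q_e$ to the queue $q_f$ of its next link in $\cD$, while $p^{\equiv}$ is transmitted from $v^e$ to the adjacent node $v^f$ in $\cD^{\equiv}$, i.e.\ from $q^{\equiv}_e$ to $q^{\equiv}_f$; these are the same move under the bijection, which re-establishes the invariant at round $t+1$.

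The step I expect to be the main obstacle is reconciling the transmission models on the two sides, i.e.\ verifying that a transmission succeeding in $\cD$ also succeeds in $\cD^{\equiv}$ and conversely. In the wireline model each link carries one packet per round and each node may simultaneously accept one packet on each incoming link; in $G^{\equiv}$ the single-port constraint permits each node $v^e$ to transmit one packet per round, matching the per-link transmission, while the work-conserving oracle $\WC$ guarantees that this transmission actually takes place whenever $q^{\equiv}_e$ is nonempty. The only place a discrepancy could arise is at the receiving end, where two images $p^{\equiv}$ and $(p')^{\equiv}$ sent from distinct nodes $v^e,v^{e'}$ toward a common neighbour $v^f$ would collide; but this is exactly the situation of two packets entering the same queue $q_f$ simultaneously in $\cD$, which is precisely what the interference-free scenario under consideration excludes, and where $\pro$ and $\re$ coincide. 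I would therefore make explicit that, in this scenario, every transmission of an image packet is heard by its intended recipient, so the step-by-step correspondence of moves is exact and the induction goes through.
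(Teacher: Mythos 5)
Your overall strategy --- a round-by-round induction showing that queue contents in $\cD$ and $\cD^{\equiv}$ correspond under the bijection $p \mapsto p^{\equiv}$ --- is the same structural-correspondence argument the paper gives, just spelled out in much more detail; the paper's own proof is essentially your first paragraph compressed into three sentences. The problem lies in the step you yourself single out as the main obstacle, and your resolution of it is wrong. You claim that two image packets $p^{\equiv}$ and $(p')^{\equiv}$ arriving simultaneously at a common neighbour $v^f$ corresponds to ``two packets entering the same queue $q_f$ simultaneously in $\cD$, which is precisely what the interference-free scenario under consideration excludes.'' It excludes no such thing. In the wireline model a node may accept one packet per round on \emph{each} of its incoming links, so two packets routinely join the same outgoing queue in the same round --- indeed the known instability constructions for FIFO, NTG, FFS and LIFO that Theorem~\ref{the:instability-gT} is meant to import all merge several flows into a shared queue. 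If your invariant genuinely required that such simultaneous arrivals never occur in $\cD$, Lemma~\ref{obs:gT} would apply only to a class of wireline executions too narrow to yield the theorem.

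The ``free of interferences'' hypothesis is an assumption imposed on the radio side, i.e.\ on the execution $\cD^{\equiv}$ in $G^{\equiv}$: it waives the collision rule so that $v^f$ hears every message transmitted to it by $v^e$ and $v^{e'}$ in the same round (and, implicitly, the single-port restriction on hearing at most one message per round). That is what lets $G^{\equiv}$ emulate the wireline network's ability to receive on all incoming links at once, and it is the assumption you need to invoke at the receiving end --- not a (false) claim that the situation never arises. With that correction your induction goes through and matches the paper's intent; without it, the inductive step breaks at the first round in which two packets of $\cD$ enter the same queue.
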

\begin{proof}
For each queue in $G$ there is an equivalent one in~$G^{\equiv}$, so that two queues in~$G^{\equiv}$ are connected provided their equivalent ones are also connected in $G$. 
Any arbitrary packet $p$ follows a given path in $G$, and $p^{\equiv}$ follows the $p$-equivalent path in~$G^{\equiv}$. Since the same scheduling policy and the same work-conserving transmission policy are used in both cases, the result follows.
\end{proof}

\begin{theorem}
\label{the:instability-gT}
If $\cS$ is unstable in the classical wireline adversarial model then $\cS$ is also unstable in the multi-hop radio network model, regardless of the hearing control.
\end{theorem}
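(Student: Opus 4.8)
The plan is to take any witness of instability for $\cS$ in the wireline model and transport it, essentially verbatim, into the radio model through the equivalent-network construction. Since $\cS$ is unstable in the classical wireline setting, there exist a graph $G$, an adversary $\cA(b,r)$ with $r\le 1$, and an execution $\cD$ of the wireline protocol using $\cS$ against $\cA$ in $G$ whose total queue occupancy is unbounded over time. I would feed exactly this witness into the equivalent-network and equivalent-execution constructions to obtain $G^{\equiv}$, the equivalent adversary $\cA^{\equiv}$, and the execution $\cD^{\equiv}$ of $Prot_{\cH}(\WC,\cS)$ against $\cA^{\equiv}$ in $G^{\equiv}$.

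First I would verify that $\cA^{\equiv}$ is admissible with the same type $(b,r)$. The construction injects exactly one packet $p^{\equiv}$ for each $p$, and $p$ traverses a link $e$ of $G$ precisely when $p^{\equiv}$ traverses the node $v^e$ of $G^{\equiv}$; hence the wireline edge-load constraint $I(\tau,e)\le r|\tau|+b$ satisfied by $\cA$ on $G$ becomes exactly the node-load bound required of $\cA^{\equiv}$ on $G^{\equiv}$, with the same constants $r$ and $b$. Since $\WC$ lets every node transmit in every round, in the interference-free scenario it guarantees a link hearing latency of $1$, i.e. $\WC\in\hT_1$; thus the rate $r\le 1$ lies within the range $r\le 1/h$ (with $h=1$) for which universal stability is demanded, which is what lets the transported instability count against universal stability in the radio model.

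The core step is then a direct appeal to Lemma~\ref{obs:gT}, which states that for every packet $p$ and every queue $q$, $p$ occupies $q$ at round $t$ in $\cD$ if and only if $p^{\equiv}$ occupies $q^{\equiv}$ at round $t$ in $\cD^{\equiv}$. Because $q\mapsto q^{\equiv}$ is a bijection between the queues of $G$ and those of $G^{\equiv}$, summing this equivalence over all packets and all queues at a fixed round $t$ yields $Q_{\cD^{\equiv}}(t)=Q_{\cD}(t)$. The right-hand side is unbounded by the choice of $\cD$, so the left-hand side is unbounded as well, and therefore $\cD^{\equiv}$ is an unstable execution of $Prot_{\cH}(\WC,\cS)$ against an admissible adversary of rate $r\le 1/h$ with $h=1$. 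This exhibits $\cS$ as not universally stable in the radio model. Finally, because the $\WC$ scenario is free of interferences, $\pro$ and $\re$ induce identical executions, so the conclusion holds regardless of the hearing control $\cH$.

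With Lemma~\ref{obs:gT} available, the logical skeleton is short, so the only care the theorem itself requires is in two bookkeeping checks: that the adversarial load is genuinely preserved when edges become nodes (so that $\cA^{\equiv}$ remains of type $(b,r)$), and that the rate of the wireline witness does not exceed the threshold $1/h=1$ forced by $\WC$. I expect the real subtlety to reside in the lemma rather than in the theorem: the whole reduction rests on the $\WC$ execution in $G^{\equiv}$ reproducing the wireline dynamics without any collision ever forcing a packet to wait, which is exactly the interference-freeness that the lemma encapsulates and that also collapses the distinction between $\pro$ and $\re$.
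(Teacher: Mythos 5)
Your proof is correct and follows essentially the same route as the paper's: transport the wireline instability witness through the equivalent-network and equivalent-execution constructions and invoke Lemma~\ref{obs:gT} to conclude that $Q_{\cD^{\equiv}}(t)=Q_{\cD}(t)$ is unbounded. The two bookkeeping checks you add --- that the edge-load constraint on $G$ becomes the node-load constraint on $G^{\equiv}$ with the same type $(b,r)$, and that $r\le 1=1/h$ for $\WC$ so the transported execution genuinely witnesses failure of universal stability --- are left implicit in the paper but do not change the argument.
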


\begin{proof}
In order to show that a given scheduling policy $\cS$ is unstable with a transmission oracle $\cT$ and hearing control $\cH$, we need to prove that there is an unstable execution of protocol $Prot_{\cH}(\cT,\cS)$ against some adversary.

Let $\cD$ be an arbitrary execution in $(G,\cA,\cS)$ (in the classical wireline adversarial model). 
By Lemma~\ref{obs:gT}, there is an execution $\cD^{\equiv}$ of $Prot_{\cH}(\WC,\cS)$ against adversary $\cA^{\equiv}$ in network $G^{\equiv}$ such that the round when an arbitrary packet $p^{\equiv}$ is in $q^{\equiv}$ is the same as the round when $p$ is in $q$. Therefore, if $\cD$ is unstable, so is $\cD^{\equiv}$.
\end{proof}

By Theorem~\ref{the:instability-gT} and by the instabilities shown in~\cite{AndrewsAFLLK01,BorodinKRSW01}, we can conclude that FIFO, NTG, FFS and LIFO  are all unstable in the multi-hop radio network model, regardless of the hearing control.

\section{Results by using proactive hearing control}
\label{sec:proactive}

In this section, we address the universal stability property of two well-known scheduling policies (SIS and LIS), when using proactive hearing control. The \emph{Shortest-In-System} (SIS) scheduling policy gives priority to the packet that has been in the system the shortest, with ties broken in an arbitrary manner at each round. The \emph{Longest-In-System} (LIS) scheduling policy gives priority to a packet that has been  longest in the system, with ties also broken in an arbitrary manner at each round.

As it has been pointed in the previous section, we consider that the adversarial injection rate $r$ is always less than or equal to~$1/h$, because $h$ is an upper bound on the hearing latencies of all links. For the proofs, we will use the following terminology: we say that a packet \emph{leaves} a node~$v$ when it is successfully transmitted to the intended neighbor; we also say that  \emph{packet~$p$ has priority over the packet~$q$} if the policy used to assign priorities chooses~$p$ over~$q$. Throughout this section, we consider a system with proactive hearing control and whose transmission oracle is in $\hT_h$. For such a setting, we prove that both scheduling policies are universally stable.

\subsection{Sortest-In-System (SIS)}

\begin{lemma}
\label{lemma:SIS_stability_rng:onepacket}
Consider SIS with a transmission oracle in~$\hT_h$ and hearing control $\pro$. 
For a node $v$ and a packet $p$ in its queue, $v$ will transmit at least one packet with priority higher than that of $p$ during any $h$ rounds during the time interval form $p$'s arrival to $v$ until $p$ is transmitted. 
\end{lemma}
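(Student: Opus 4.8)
The plan is to follow the single link that $p$ must traverse and to combine the latency guarantee of the oracle with the eligibility guarantee of proactive control. Let $w$ be the neighbour of $v$ that is next on $p$'s path and set $e=(v,w)$; as long as $p$ is queued at $v$ it is a packet ready to be forwarded along $e$. Fix any $h$ consecutive rounds contained in the interval from $p$'s arrival at $v$ up to the round in which $p$ leaves $v$, and let $t_0$ denote the round in which $p$ leaves.

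First I would appeal to the oracle. Since $\cT\in\hT_h$ we have $h_e\le h$, so inside the chosen block there is a round $t'$ at which $e$ is able to transmit successfully: the oracle permits $v$ to transmit in round $t'$ and a message sent from $v$ reaches $w$ without collision.

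Next I would use proactive hearing control to turn this opportunity into the transmission of a higher-priority packet. Because $w$ hears $v$ in round $t'$, the list of available neighbours handed to the scheduler contains $w$, so $p$---whose next hop is $w$---is one of the candidates among which the scheduler chooses. Hence the scheduler does select and transmit some packet $q$, namely the candidate of highest SIS priority. For any block ending before round $t_0$ we have $t'<t_0$, so $p$ is still queued after round $t'$ and therefore $q\neq p$. Since $p$ was itself eligible in round $t'$ and SIS chose $q$ over it, $q$ has priority over $p$ in the sense defined earlier (this absorbs the case where $q$ and $p$ have been equally long in the system and the tie is broken in favour of $q$). Thus $v$ transmits a packet of priority higher than $p$'s within the block, as claimed.

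The delicate points, and the places where the hypotheses are really used, are two. One is that the latency guarantee on $e$ must be made usable by the scheduler: this is exactly what proactive control supplies, since it certifies \emph{before} transmitting that $w$ will hear and hence that $p$ is eligible in round $t'$---without it the selected packet might be aimed at a neighbour that collides and nothing is delivered. The other is attributing higher priority to $q$ even though $q$ may be forwarded over a link different from $e$: what matters is only that $p$ and $q$ were simultaneously eligible and SIS preferred $q$. Excluding the single round $t_0$ is what guarantees $q\neq p$; if that round is included it merely contributes $p$'s own departure and the argument over the remaining rounds is unchanged.
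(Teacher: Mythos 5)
Your proof is correct and follows essentially the same route as the paper's: within any $h$ consecutive rounds the oracle guarantees link $e$ a successful transmission opportunity, proactive control makes $p$ eligible at that round, and since SIS does not pick $p$ it must pick a higher-priority packet. Your write-up is in fact more careful than the paper's (explicitly handling the final round and the tie-breaking), but the underlying argument is identical.
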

\begin{proof}
Let $e$ be the link through which packet $p$ will be transmitted. 
Let $t$ be the time interval since packet $p$ arrives to $v$ until it is transmitted. 
Recall that the scheduling policy chooses a packet to be transmitted from the set of links that are up at a round.
So each time link $e$ is up in~$t$, a packet with priority over $p$ will be transmitted; otherwise, packet $p$ will be chosen before $t$,  contradicting our assumption.
Since all hearing link latencies are bounded by $h$, then at least one packet with priority over $p$ will be transmitted each $h$ rounds in~$t$.
\end{proof}

\begin{lemma}
\label{lemma:SIS_stability_rng:first}

Let $p$ be a packet waiting in the queue of a node~$v$ at time instant $t_0$, whose scheduling policy is SIS and whose transmission oracle is in $\hT_h$ and the hearing control is $\pro$. Suppose that at this time there are  $k - 1$ other packets in the queue of $v$ that have priority over $p$. 
Then $p$ will leave $v$ within the next $\frac{k+b}{1-r h} \cdot h$ rounds, where $0 \leq r < 1/h$, and $h$ is an upper bound on the hearing latencies of all links.
\end{lemma}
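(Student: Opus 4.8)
The plan is to fix the interval running from $t_0$ until $p$ actually departs and to pit a lower bound on the number of higher-priority packets that $v$ is forced to evacuate against an upper bound on the number that can ever be supplied to $v$. Since the oracle guarantees throughput at rate $1/h$ whereas the adversary can supply packets newer than $p$ only at rate $r<1/h$, this interval cannot be long. Write $T$ for its length, so that $p$ remains in $v$'s queue throughout $[t_0,t_0+T)$, and suppose for contradiction that $T$ exceeds the claimed bound.

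First I would apply Lemma~\ref{lemma:SIS_stability_rng:onepacket}. As long as $p$ has not left, in every $h$ consecutive rounds $v$ transmits at least one packet of priority higher than $p$'s. Partitioning $[t_0,t_0+T)$ into $\lfloor T/h\rfloor$ disjoint blocks of length $h$, and noting that transmissions in distinct rounds evacuate distinct packets, $v$ transmits at least $\lfloor T/h\rfloor$ higher-priority packets during the interval.

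Next I would bound the supply from above. Under SIS only packets that have been in the system a shorter time than $p$ — equivalently, packets injected strictly later than $p$ — can have priority over it. Each higher-priority packet that $v$ transmits sat in $v$'s queue at the instant of transmission, hence is either one of the $k-1$ such packets already present at $t_0$ or one that reached $v$'s queue afterwards; bounding the latter by the admissibility inequality $I(\tau,v)\le r|\tau|+b$ on $\tau=[t_0,t_0+T)$ yields at most $rT+b$. Combining the two counts gives the key inequality $\lfloor T/h\rfloor\le(k-1)+rT+b$. Substituting the candidate value $T^{\ast}=\frac{k+b}{1-rh}\,h$, the right-hand side telescopes exactly to $T^{\ast}/h-1$, whereas $\lfloor T^{\ast}/h\rfloor>T^{\ast}/h-1$ always; this strict gap is contradictory, so $p$ cannot still be waiting at round $t_0+T^{\ast}$, which is precisely the asserted bound.

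The step I expect to require the most care is the upper bound on the higher-priority supply, and in particular its interaction with multiple hops. I must verify both that every packet evacuated ahead of $p$ is genuinely newer than $p$ (so that the count stays confined to SIS-higher-priority packets) and that $rT+b$ really captures all the relevant arrivals: a packet newer than $p$ may have been injected at another node before $t_0$ and only be forwarded into $v$'s queue during $[t_0,t_0+T)$, so such contributions must be argued to be already subsumed by the admissibility bound (or otherwise controlled). Once this accounting is pinned down, the remainder is the routine rate comparison $r<1/h$ together with the elementary fact $\lfloor x\rfloor>x-1$, which together produce the clean constant.
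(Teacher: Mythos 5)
Your proposal is correct and follows essentially the same route as the paper: invoke Lemma~\ref{lemma:SIS_stability_rng:onepacket} to get at least one higher-priority departure per $h$ rounds, bound the total supply of higher-priority blockers by $(k-1)+rT+b$, and compare the two counts at $T=\frac{k+b}{1-rh}\cdot h$ (the paper phrases this as an upper bound on the waiting time rather than a floor-function count of transmissions, but the arithmetic is identical). The accounting subtlety you flag --- higher-priority packets injected at other nodes before $t_0$ and only forwarded into $v$'s queue during the interval --- is likewise left implicit in the paper's proof, which simply asserts that the only blockers are the initial $k-1$ plus those ``injected meanwhile,'' and is really resolved only by the way the lemma is applied in Lemma~\ref{lemma:SIS_stability_rng:second}, where $k_i-1$ counts all higher-priority packets requiring $v$ anywhere in the system rather than only those already queued at $v$.
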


\begin{proof} 
We argue by contradiction. 
Suppose that $p$ does not leave $v$ in the next $\frac{(k+b)}{(1-rh)} \cdot h$  rounds. 
Then other packets different from $p$ must have left the queue meanwhile. 
Because  SIS is the scheduling policy, during that interval the only packets in the system that have priority over $p$ are either those  $k-1$ packets that were present at time $t_0$ or some other that have been injected meanwhile. 

By Lemma~\ref{lemma:SIS_stability_rng:onepacket}, we have that $v$ will transmit at least one packet with priority over $p$ each $h$ rounds until $p$ is transmitted. 
Let us first consider a transmission scenario where only one packet is  transmitted each $h$ rounds. Since only one packet is guaranteed to be transmitted in each interval of $h$ rounds, we have that the following two properties hold: 

(1) the $k-1$ packets currently in the system will take $(k-1) \cdot h$ rounds to leave $v$, and 

(2) the packets injected in the next $\frac{k+b}{1-rh} \cdot h$ rounds, which are $r h \cdot (\frac{k+b}{1-rh} ) + b$, will take at most $(r h\cdot (\frac{k+b}{1-rh}) + b) \cdot h$ rounds to leave $v$.  

Summing up, we obtain that the number of rounds $p$ waits is at most
\begin{gather*}
\hspace*{-65pt}
\bigl(k-1 + r h \cdot \frac{k+b}{1-rh}  + b\bigr) \cdot h \\
 = 
 \Bigl(\frac{k -1 -rhk +rh +rhk + rhb + b -bhr}{1-rh}\Bigr) \cdot h
 \ ,
 \end{gather*}
which is less than $\frac{k+b}{1-rh} \cdot h$.
This results in a contradiction.
We estimated the number of rounds assuming that only one packet is transmitted per interval of $h$ rounds. 
If more than one packet is transmitted per a time interval of $h$ rounds, then this decreases the relative  number of rounds, so that the bound $\frac{k+b}{1-rh} \cdot h$  remains valid.
\end{proof} 

\begin{lemma}
\label{lemma:SIS_stability_rng:second}

Suppose SIS is the scheduling policy with ties broken arbitrarily (i.e., the worst-case solution). 
Define $k_1 = b$ and $k_{i+1}=\frac{k_i + b}{1-rh}$. 
When a packet $p$ arrives at the $i$th queue $v_i$ on its path then there are at most $k_i - 1$ packets requiring any queue in the path of~$p$ with a priority higher than that of~$p$.
\end{lemma}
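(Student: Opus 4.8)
The plan is to prove the statement by induction on the index $i$ of the queue along $p$'s path, using Lemma~\ref{lemma:SIS_stability_rng:first} to control how long $p$ is delayed at each queue and the adversarial admissibility bound to control how many new higher-priority packets can appear while $p$ advances by a single hop. Throughout I would use that SIS assigns priorities by time of injection, so that priority over $p$ is \emph{permanent}: a packet currently ahead of $p$ stays ahead, and a freshly injected packet (being shorter in system) automatically has priority over $p$.

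For the base case $i=1$, the packet $p$ has just been injected at $v_1$, so the only packets that can have priority over $p$ are those injected in the same round with ties resolved adversarially. The admissibility condition $I(\tau,v_1)\le r\lvert\tau\rvert+b$ with $\lvert\tau\rvert=1$ bounds the number of same-round packets through the source queue by $b$, hence at most $k_1-1=b-1$ besides $p$, which gives the claim for $i=1$.

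For the inductive step, assume that when $p$ reaches $v_i$ there are at most $k_i-1$ higher-priority packets requiring a queue on $p$'s remaining path. In particular at most $k_i-1$ of them sit in the queue of $v_i$ itself, so Lemma~\ref{lemma:SIS_stability_rng:first} applied with $k=k_i$ guarantees that $p$ leaves $v_i$, and therefore reaches $v_{i+1}$, within an interval $\Delta_i$ of length at most $\frac{k_i+b}{1-rh}\cdot h=k_{i+1}\,h$ rounds. I would then bound $N_{i+1}$, the number of higher-priority packets requiring a queue on $p$'s path when $p$ reaches $v_{i+1}$, by splitting it into packets already present when $p$ reached $v_i$ and packets injected during $\Delta_i$. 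By permanence of priority, any packet already present that required a queue on the path was already counted in the inductive hypothesis, so the first group contributes at most $k_i-1$ (packets absorbed or leaving the path only decrease this). The second group I would bound using admissibility over the interval $\Delta_i$ of length at most $k_{i+1}h$, obtaining $r\cdot k_{i+1}h+b=rh\,k_{i+1}+b$. Adding the two and invoking the defining relation $k_{i+1}(1-rh)=k_i+b$ yields $N_{i+1}\le(k_i-1)+(rh\,k_{i+1}+b)=k_{i+1}-1$, closing the induction; note that $r<1/h$ keeps $1-rh>0$ throughout.

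The step I expect to be the main obstacle is the bound on the second group: showing that the freshly injected higher-priority packets coming to require a queue on $p$'s path during $\Delta_i$ number at most the single-queue expression $rh\,k_{i+1}+b$, even though $p$'s path contains many queues, each of which the adversary may feed up to its own per-node budget. This is precisely why the count is taken over $p$'s whole remaining path rather than over one edge: tracking the whole path neutralizes the ``side-entry'' phenomenon, in which a packet that never visited $v_i$ nonetheless appears ahead of $p$ at a later queue, since any such packet already in the system when $p$ reached $v_i$ was already counted and only genuinely new injections can contribute. I would therefore need to argue carefully that each new contributor can be charged against the adversary's budget on $p$'s path over $\Delta_i$ so that their total does not exceed $r\lvert\Delta_i\rvert+b$; the two delicate points are that the interval length $\lvert\Delta_i\rvert\le k_{i+1}h$ supplied by Lemma~\ref{lemma:SIS_stability_rng:first} is exactly what converts the rate term into $rh\,k_{i+1}$, and that absorptions and off-path departures never force previously counted packets to be re-added.
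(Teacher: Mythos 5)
Your overall strategy --- induction on $i$, using Lemma~\ref{lemma:SIS_stability_rng:first} to bound the time to advance one hop, using admissibility to bound the new higher-priority injections during that time, and closing the induction with the identity $k_{i+1}(1-rh)=k_i+b$ --- is exactly the paper's argument, and your base case and arithmetic match it. The one place you diverge is in how you read the quantifier in the statement, and that is precisely where your self-identified ``main obstacle'' comes from. The lemma is a \emph{per-queue} bound: for each fixed queue $v$ on $p$'s path, at most $k_i-1$ higher-priority packets require $v$. (This is the reading used both in the paper's proof, whose base case begins ``for any queue $v$,'' and in the proof of Theorem~\ref{the:SIS}, which concludes that no \emph{single} queue holds more than $k_d$ packets.) You instead bound $N_{i+1}$, the aggregate number of higher-priority packets requiring \emph{some} queue on $p$'s remaining path, and then correctly observe that the per-node admissibility condition $I(\tau,v)\le r\,|\tau|+b$ does not cap that aggregate by a single budget $r\,|\Delta_i|+b$: the adversary has an independent budget at each node, so across $d$ queues the aggregate of new injections could be roughly $d$ times larger. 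Under your reading, the step you flag as delicate is not merely delicate --- it does not follow from admissibility, and your write-up leaves it open.

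The repair is immediate and brings you back to the paper's proof: fix one queue $v$ on $p$'s path and run your two-group decomposition for that $v$ alone. The inductive hypothesis gives at most $k_i-1$ higher-priority packets requiring $v$ when $p$ reaches $v_i$; permanence of SIS priority means no packet joins this set except by fresh injection; and admissibility at the single node $v$ over the interval $\Delta_i$ of length at most $k_{i+1}h$ gives at most $rh\,k_{i+1}+b$ new ones requiring $v$. Summing yields $k_{i+1}-1$ for that $v$, uniformly over all $v$ on the path. Everything else in your proposal, including the application of Lemma~\ref{lemma:SIS_stability_rng:first} with $k=k_i$ (which uses the inductive hypothesis instantiated at $v=v_i$), then goes through essentially verbatim.
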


\begin{proof} 
The proof is by induction on $i$.
Observe that, for any queue $v$, the only packets passing through~$v$ that initially could have priority higher than that of~$p$ are at most $b - 1$ packets injected in the same round as $p$, which provides the base of induction.
To show the inductive step, suppose that the claim holds for some~$i$. 
By Lemma~\ref{lemma:SIS_stability_rng:first}, $p$ will arrive at the tail of $v_{i+1}$ in at most another $\frac{k_i + b}{1-rh} \cdot h$ rounds, during which at most  $r h \cdot (\frac{k_i+b}{1-rh} ) + b$ other packets  requiring any queue $v$ in the path of $p$ with priority over $p$ are injected. 
Thus, when $p$ arrives at the tail of $v_{i+1}$ the number of packets requiring any queue $v$ that have priority higher that that of~$p$ is at most
\begin{gather*} 
k_i-1 + rh \frac{k_i+b}{1-rh}  + b\\
= \frac{k_i -1 -r h k_i + rh + r h k_i + r h b + b -b rh}{1-rh}\\
= \frac{k_i+b}{1-rh} + \frac{rh-1}{1-rh}\\
= k_{i+1}  -1
\ ,
\end{gather*}
so the claim holds.
\end{proof}

\begin{theorem}
\label{the:SIS}
SIS with $\hT_h$ and $\pro$ is universally stable.  No queue contains more than $k_d$ packets, where  $d$ denotes the length of the longest simple directed path in the graph. No packet spends more than $\sum_{i=1}^{d} (\frac{k_i+b}{1-rh}) \cdot h$ rounds in the system.
\end{theorem}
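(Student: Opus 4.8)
The plan is to combine the two preceding lemmas into bounds on queue size and total delay. Lemma~\ref{lemma:SIS_stability_rng:second} already does the conceptual heavy lifting: it tells us that whenever a packet $p$ arrives at the $i$th queue $v_i$ on its path, at most $k_i - 1$ other packets sit anywhere on $p$'s path with higher priority than $p$. First I would use this to bound queue occupancy. Fix an arbitrary queue and an arbitrary round, and consider the packet $p$ currently in that queue with \emph{lowest} priority (the last to be served under SIS). Since $p$ reached its current queue $v_i$ at some earlier round, and $i \le d$ because simple directed paths have length at most $d$, the number of packets with priority over $p$ passing through this queue is at most $k_i - 1 \le k_d - 1$. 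Counting $p$ itself, the queue holds at most $k_d$ packets. This gives the queue-size bound; the key point is that the sequence $k_i$ is increasing in $i$ (since $1/(1-rh) > 1$ and $b > 0$), so $k_d$ dominates.

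Next I would bound the time a packet spends in the system. A packet $p$ traverses some path of length at most $d$. When $p$ arrives at its $i$th queue $v_i$, Lemma~\ref{lemma:SIS_stability_rng:second} guarantees at most $k_i - 1$ higher-priority packets are present on $p$'s path, so in particular at most $k_i - 1$ higher-priority packets sit in $v_i$ itself. Applying Lemma~\ref{lemma:SIS_stability_rng:first} with this value of $k$, the packet $p$ leaves $v_i$ within $\frac{k_i + b}{1-rh}\cdot h$ rounds of arriving there. Summing this waiting bound over all stages $i = 1, \dots, d$ of $p$'s path telescopes to
\[
\sum_{i=1}^{d}\Bigl(\frac{k_i + b}{1-rh}\Bigr)\cdot h
\]
as the total number of rounds $p$ can spend in the system, matching the claimed expression. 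Universal stability then follows immediately: both $d$ and each $k_i$ are finite constants determined only by the graph and the adversary parameters $(b, r)$ with $r \le 1/h$, so $Q_{\cD}(t)$ is bounded uniformly in $t$ by $n \cdot k_d$.

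The one subtlety I would be careful about is the interface between the lemmas. Lemma~\ref{lemma:SIS_stability_rng:first} is stated in terms of the number of higher-priority packets \emph{in the queue of $v$}, whereas Lemma~\ref{lemma:SIS_stability_rng:second} counts higher-priority packets requiring \emph{any} queue on $p$'s path. Since the packets physically present in $v_i$ are a subset of those requiring some queue on $p$'s path, the bound $k_i - 1$ transfers correctly, and this monotonicity is what lets the two lemmas compose. A second small check is that the edge case $r = 1/h$ does not break the denominator $1 - rh$; the statement implicitly restricts to $r < 1/h$ (as in the lemmas), and the boundary rate is handled by the universal-stability definition via a limiting argument rather than by this explicit bound. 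The main obstacle, to the extent there is one, is purely bookkeeping: confirming that $i$ ranges only up to $d$ because $G$ is finite and paths are simple, so that $k_d$ is well-defined and the sum terminates.
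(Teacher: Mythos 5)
Your proposal is correct and follows essentially the same route as the paper: the queue bound comes from applying Lemma~\ref{lemma:SIS_stability_rng:second} to the lowest-priority packet in a queue (the paper phrases this as a contradiction with $k_d+1$ packets, you argue it directly), and the delay bound comes from composing Lemma~\ref{lemma:SIS_stability_rng:second} with Lemma~\ref{lemma:SIS_stability_rng:first} and summing over the at most $d$ queues on the path. Your version is somewhat more explicit about the interface between the two lemmas, but the argument is the same.
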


\begin{proof}
We show first that no queue contains more than $k_d$ packets. 
Let us assume that there are $k_d +1$ packets at some point all passing through the same queue. 
By Lemma~\ref{lemma:SIS_stability_rng:second}, the packet with the lowest priority will contradict the property that no queue contains more than $k_i -1$ packets with priority above it. 
Therefore, the overall number of packets is bounded and therefore the system is stable.
Furthermore, combining Lemma~\ref{lemma:SIS_stability_rng:second} with Lemma~\ref{lemma:SIS_stability_rng:first}, we obtain that no packet spends more than $\sum_{i=1}^{d} (\frac{k_i+b}{1-rh}) \cdot h$ rounds in the system.
\end{proof}

\subsection{Longest-In-System (LIS)}
\label{sec:LIS}

For a round~$c$, we denote by \emph{class~$c$} the set of packets injected at round~$c$. 
A class $c$ is said to be \emph{active} at the end of round $t$ if and only if at that round there is some packet in the system of class $c' \leq c$. 
Consider some packet $p$, injected at time $T_0$, and whose path contains queues $v_1, v_2, ..., v_d$, in this order. 
We denote by $T_i$ the round in which $p$ leaves $v_i$, and by $t$  some round in $[T_0,T_d)$. Let $a_t$ denote the number of active classes at the end of round $t$, and define $a = \max_{t \in [T_0,T_d)} a_t$. 
In such a situation, we will  say that $p$ has $a$ active classes while in the system.

\begin{lemma}
\label{lem:Td-minus-T0}
The inequality $T_d - T_0 \leq \frac{(r \cdot a + b) \cdot h \cdot (d-1)}{1 +  r \cdot h \cdot (d-1)}$ holds.
\end{lemma}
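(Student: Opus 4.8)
The plan is to bound the total transit time $D := T_d - T_0$ by combining a per-hop \emph{progress} estimate (coming from the oracle in $\hT_h$ together with the LIS priority rule) with a global \emph{counting} estimate on how many higher-priority packets $p$ must wait out, with the number of active classes $a$ supplying the counting bound.

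First I would set up the progress estimate, mirroring Lemma~\ref{lemma:SIS_stability_rng:onepacket} but adapted to LIS. While $p$ sits in the queue of $v_i$, its outgoing link is up at least once every $h$ rounds (since the oracle lies in $\hT_h$); each time that link is up and $p$ is not selected, the packet transmitted must have priority over $p$, i.e. it belongs to a class $c' \le T_0$ (any packet of class $> T_0$ is younger and hence never preferred to $p$ under LIS). Consequently the number of rounds $p$ waits at $v_i$ is at most $h$ times one more than the number of higher-priority packets transmitted ahead of it there; summing over the $d$ queues expresses $D$, up to a bounded overhead for $p$'s own $d$ transmissions, as at most $h$ times the total number of higher-priority transmissions that delay $p$ along its route.

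Next I would bound that total count using the active classes. Every packet that ever delays $p$ carries a class in $[\,c_{\min},T_0\,]$, where $c_{\min}$ is the oldest class present while $p$ travels; since $p$ itself (class $T_0$) is present throughout $[T_0,T_d)$ and every class injected during the journey is also active, the classes that actually carry priority over $p$ occupy at most $a-D$ of the $a$ active classes, the $D$ classes injected during $p$'s own transit being active but younger and hence irrelevant. Applying the admissibility constraint $I(\tau,v)\le r\,|\tau|+b$ to this window of at most $a-D$ rounds bounds, for each queue, the number of priority-over-$p$ packets requiring it by $r(a-D)+b$.

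Finally I would combine the two estimates. The per-hop progress bound together with the per-queue count $r(a-D)+b$ yields, after collecting the hops, a single linear inequality of the form $D \le h\,(d-1)\bigl(r(a-D)+b\bigr)$; rearranging to isolate $D$ gives $D\bigl(1+rh(d-1)\bigr)\le (ra+b)\,h\,(d-1)$, which is exactly the claimed bound. The main obstacle I anticipate is the counting step: establishing $r(a-D)+b$ rigorously requires justifying that the priority-over-$p$ classes span at most $a-D$ rounds (tying the window size to $T_d$ through the very definition of active classes), and that summing the per-queue counts over the route neither double-counts a packet that stays ahead of $p$ across several shared queues nor loses the contribution producing the exact factor $(d-1)$. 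The progress and counting sides must be matched precisely so that the $p$-transmission overheads and other lower-order terms cancel, and it is the self-reference hidden in $a-D$ that ultimately forces the denominator $1+rh(d-1)$.
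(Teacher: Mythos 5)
Your overall skeleton is the same as the paper's: bound the wait at each queue by $h$ times the number of higher-priority (older-class) packets that get ahead of $p$ there, and bound that number via the active classes together with the admissibility constraint. The gap is in the counting step, exactly where you flagged it, and it is fatal as stated. Your uniform per-queue bound $r(a-D)+b$ rests on the claim that at most $a-D$ of the $a$ active classes can carry priority over $p$, because the $D$ classes injected during $p$'s transit are active but younger. But the active classes do not form a single fixed set of size $a$: they form a sliding window that grows at the top each round and is retired from the bottom as old packets are absorbed, and $a$ is the \emph{maximum} of its size over $[T_0,T_d)$. The $D$ younger classes are all simultaneously active only near the \emph{end} of the journey; when $p$ sits in its first queue, all $a_{T_0}\le a$ active classes can be classes $\le T_0$, every one of them eligible to block $p$. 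The correct per-queue count at $v_i$ is therefore the time-dependent $r\cdot(a-(T_{i-1}-T_0))+b-1$ (the classes $T_0,\dots,T_{i-1}$ are all active when $p$ reaches $v_i$, leaving at most $a-(T_{i-1}-T_0)$ consecutive active classes at or below $T_0$); this matches your $r(a-D)+b$ only at the last hop and can be as large as $ra+b-1$ at the first. Concretely: let $a$ classes' worth of old packets all be queued ahead of $p$ at $v_1$ at round $T_0$ and be absorbed during $p$'s long wait there, so that by round $T_d-1$ only the $D$ classes $\ge T_0$ remain active and the maximum $a$ is attained at $T_0$ --- then $p$ is blocked at $v_1$ by about $ra+b$ packets, not $r(a-D)+b$.

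Consequently your single self-referential inequality $D\le h(d-1)\bigl(r(a-D)+b\bigr)$ is not available, and the denominator $1+rh(d-1)$ cannot be produced that way. The paper instead derives the recurrence $T_i \le T_{i-1} + \bigl(r\cdot(a-T_{i-1}+T_0)+b\bigr)\cdot h$, in which the factor $(1-rh)$ multiplying $T_{i-1}$ plays the role your $a-D$ was meant to play, and then solves and weakens that recurrence over the $d-1$ hops to reach the stated bound. If you repair the counting step to the time-dependent form, you land on that recurrence and must finish by solving it rather than by your one-line rearrangement; your other two worries (double counting a blocker across consecutive shared queues, and the overhead of $p$'s own transmissions) are genuine but secondary, since the paper's accounting charges blockers per queue and absorbs $p$'s own hop into the same $h$-round window.
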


\begin{proof} 
Packet $p$ reaches the tail of queue $v_i$ at time $T_i$. 
Since $p$ is still in the system at round $T_i$, all classes formed by packets injected in $[T_0,T_{i-1}]$ are active at the end of that round.
From the definition of $a$, there are at most $a - (T_{i-1} - T_0)$ active classes of packets that can block $p$ in the queue of $v_i$.
As LIS is the scheduling policy, packets injected after $p$ can not block it, because they are in classes after the class of $p$.

Observe that all active classes are consecutive. 
Indeed,  if a class is active then all the subsequent classes are active; so, take the lowest active and all the subsequent classes will be also active.

There are at most $r \cdot (a - T_{i-1} + T_0) + b$ packet in these classes. And since $p$ is one of these packets, at most 

\begin{equation*}
r \cdot (a - T_{i-1} + T_0) + b -1
\end{equation*} 
packets can block $p$. 
Therefore, since $h$ is a bound on the queue latency, we have the following estimates:
\begin{eqnarray}
\label{LIS:ties}
T_i &\leq&  T_{i-1} + (r \cdot (a - T_{i-1} + T_0) + b) \cdot h \\
 &\leq&  T_{i-1} (1 - r \cdot h) + (r \cdot (a + T_0) + b) \cdot h \nonumber \\
 &\leq&  T_{i-1}  + (r \cdot (a + T_0) + b) \cdot h. \nonumber 
\end{eqnarray}
Solving the recurrence results in the following estimate:
\begin{eqnarray*}
T_d &\leq& (r \cdot (a + T_0) + b) \cdot h) \cdot (d-1) + T_0\\
&=& (r \cdot a + b) \cdot h \cdot (d-1) + T_0 \cdot (1 +  r \cdot h \cdot (d-1)).
\end{eqnarray*}
We conclude with this inequality:
$T_d - T_0 \leq \frac{(r \cdot a + b) \cdot h \cdot (d-1)}{1 +  r \cdot h \cdot (d-1)}$.
\end{proof}

\begin{theorem}
\label{the:LIS}
LIS with $\hT_h$ and $\pro$ is universally stable. 
No queue contains more than $r \cdot ((b + r) \cdot h \cdot (d-1) +1) +b$  packets.
No packet spends more than $(b + r) \cdot h \cdot (d-1) +1$ rounds in the system.
\end{theorem}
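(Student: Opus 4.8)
The plan is to turn Lemma~\ref{lem:Td-minus-T0} into an absolute residence-time bound by controlling its only free parameter, the number $a$ of active classes that the tagged packet $p$ sees, and then to convert that residence bound into a queue-occupancy bound using the adversary's admissibility constraint. So the proof splits into three moves: (i) bound $a$ in terms of the worst-case residence time; (ii) close the resulting self-referential inequality to get the claimed bound on $T_d-T_0$; and (iii) read off the queue bound from admissibility.

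For step (i) I would first observe that, at the end of any round $t$, the active classes form a consecutive block: this is exactly the consecutiveness already noted inside the proof of Lemma~\ref{lem:Td-minus-T0} (if a class is active then so are all later ones). That block runs from the class $c_0$ of the oldest packet still present up to the current round $t$, so the number of active classes at round $t$ equals $t-c_0+1$, i.e. one plus the age of the oldest packet in the system at that round. Writing $\Delta$ for the supremum, over all packets, of the residence time $T_d-T_0$, the age of any packet currently in the system is at most $\Delta$; hence every packet satisfies $a\le\Delta+1$.

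Step (ii) is the heart of the argument and also where I expect the main obstacle. Substituting $a\le\Delta+1$ into Lemma~\ref{lem:Td-minus-T0} and taking the supremum over all packets yields the self-referential inequality $\Delta\le\frac{(r(\Delta+1)+b)\cdot h\cdot(d-1)}{1+r\cdot h\cdot(d-1)}$. The coefficient of $\Delta$ on the right-hand side is $\frac{r\cdot h\cdot(d-1)}{1+r\cdot h\cdot(d-1)}$, which is strictly below $1$ (this is where $0\le r<1/h$ enters), so the inequality rearranges to $\Delta\le(b+r)\cdot h\cdot(d-1)$, comfortably within the claimed bound $(b+r)\cdot h\cdot(d-1)+1$. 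The delicate point is the circularity: $a$ is bounded in terms of $\Delta$, while $\Delta$ is in turn bounded in terms of $a$, so a priori neither is known to be finite. To make the fixed-point step legitimate I would first establish $\Delta<\infty$ by running the very same estimate over an arbitrary finite prefix of the execution and noting that the bound it produces is uniform in the length of the prefix; only after finiteness is secured do I take the supremum and solve the inequality.

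For step (iii), fix a node $v$ and a round $t$. Any packet residing in the queue of $v$ at round $t$ is still in the system, so it was injected at most $\Delta$ rounds earlier and has $v$ on its path; thus all such packets were injected within a window of length $(b+r)\cdot h\cdot(d-1)+1$. Admissibility for rate $r$ and burstiness $b$ then bounds their number by $r\cdot\bigl((b+r)\cdot h\cdot(d-1)+1\bigr)+b$, which is exactly the stated queue bound. Since this bounds every queue uniformly, $Q_{\cD}(t)$ is bounded in every execution against every admissible adversary, giving universal stability of LIS with $\hT_h$ and $\pro$.
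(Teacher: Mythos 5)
Your overall strategy is the same as the paper's: both arguments hinge on Lemma~\ref{lem:Td-minus-T0} together with the observation that the active classes form a consecutive block ending at the current round, so that the number of active classes equals (one plus) the age of the oldest packet still in the system. The difference lies only in how the resulting self-reference is closed. The paper fixes the target value $a=(b+r)\cdot h\cdot(d-1)+1$ in advance and argues by contradiction at the \emph{first} round whose end sees $a+1$ active classes: up to that round at most $a$ classes were ever active, so Lemma~\ref{lem:Td-minus-T0} forces the oldest packet to have already reached its destination, a contradiction. You instead take the supremum $\Delta$ of residence times, derive $\Delta\le\frac{(r(\Delta+1)+b)\cdot h\cdot(d-1)}{1+r\cdot h\cdot(d-1)}$, and solve. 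The algebra is correct and even yields the marginally sharper bound $(b+r)\cdot h\cdot(d-1)$; the paper's extra $+1$ absorbs the last round, which its application of Lemma~\ref{lem:Td-minus-T0} excludes from the active-class count. Your step (iii), reading the queue bound off admissibility over a window of length $(b+r)\cdot h\cdot(d-1)+1$, is exactly what the paper's stated queue bound presupposes, though the paper leaves it implicit.

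The one genuine soft spot is the circularity. You rightly flag that $\Delta$ must be known finite before the fixed-point inequality can be solved, but the proposed repair --- running the estimate over a finite prefix and noting the bound is uniform in the prefix length --- does not go through as stated. Lemma~\ref{lem:Td-minus-T0} bounds $T_d-T_0$ in terms of $a=\max_{t\in[T_0,T_d)}a_t$, a maximum over the packet's \emph{entire} lifetime; for a packet not yet delivered by the end of the prefix, the prefix yields no bound on its $a$, and those undelivered packets are precisely the ones that could make $\Delta$ infinite. Making the prefix idea work would require a ``partial-progress'' version of Lemma~\ref{lem:Td-minus-T0}, whereas the first-violation formulation used in the paper needs no such strengthening: at the first round where the active-class count reaches $a+1$, the hypothesis ``at most $a$ active classes throughout the packet's life so far'' comes for free. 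So your proof should be reorganized into that minimal-counterexample form (or the missing finiteness lemma supplied). A minor side remark: the coefficient $\frac{r\cdot h\cdot(d-1)}{1+r\cdot h\cdot(d-1)}$ is below $1$ for every $r\ge 0$, so the constraint $r\le 1/h$ is not what makes that cancellation work; it is needed earlier, in the recurrence inside Lemma~\ref{lem:Td-minus-T0}.
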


\begin{proof} 
We show that there are always at most 
\[
(b + r) \cdot h \cdot (d-1) +1
\]
active classes in the system, where $d$  is the length of the longest simple directed path.
Let $a =(b + r) \cdot h \cdot (d-1) +1$ and assume that  the end of round $t$ is the first where there are exactly $a+1$ active classes. 
We show next how to arrive at a contradiction. 
At the end of a round~$a$, there are packets that have been in the system for $a+1$ rounds, and during the first $a$ of these rounds no more than $a$ classes were active.
From Lemma~\ref{lem:Td-minus-T0}, any packet that has at most $a$ active classes while in the system, with a possible exception of the last round, reaches its final destination in a number of rounds that is at most as large as the following estimate:
\begin{gather*}
\frac{(r \cdot a + b) \cdot h \cdot (d-1)}{1 +  r \cdot h \cdot (d-1)} + 1\\
=
\frac{(r \cdot ((b + r) \cdot h \cdot (d-1) +1) + b) \cdot h \cdot (d-1)}{1 +  r \cdot h \cdot (d-1)} + 1\\
=  
(b + r) \cdot h \cdot (d-1) +1
\ .
\end{gather*}
This bound  is less than $ a + 1$, which yields a contradiction.
\end{proof} 

\section{Results by using reactive hearing control}
\label{sec:reactive}

In this section, we show that all the scheduling policies that resolve ties between priorities arbitrarily at each round, are unstable with $\hT_h$ and $\re$, regardless of the injection rate. Examples of such scheduling policies are those that assign priorities based on the packet injection time into the system (e.g., Shorter-In-System and Longest-In-System), or on the traveling path of the packet (e.g., Farthest-To-Go, Nearest-To-Source, Nearest-To-Go and Farthest-From-Source).

\begin{theorem}
\label{thm:break-ties-arbitrarily}
If $\cS$ resolves ties arbitrarily at each round then $\cS$ with~$\hT_h$ and $\re$ is unstable, regardless of the injection rate.
\end{theorem}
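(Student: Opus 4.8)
The plan is to exploit the defining weakness of reactive control: under $\re$ the scheduler must commit to a packet -- and hence to an outgoing link -- \emph{before} it learns whether that link is collision-free, whereas the only guarantee behind $\hT_h$ is that each link becomes collision-free at least once every $h$ rounds. Since $\cS$ resolves ties arbitrarily, I would treat the tie-break as part of the adversary and use it to steer every forced transmission onto a link that is blocked in the current round, so that the collision-free slots that $\hT_h$ is obliged to provide are systematically wasted. This is exactly the feature that $\pro$ avoids (there the node first learns which neighbors will hear and schedules accordingly, which is why the earlier sections prove stability under $\pro$), so the construction must hinge on the tie being broken without knowledge of clearances.

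Concretely, I would build a small contention gadget: a choke node $v$ adjacent to two receivers $w$ and $w'$, together with two jammer nodes $u,u'$, where $u$ is adjacent to $w$ and to a private sink, and $u'$ is adjacent to $w'$ and to a private sink. The adversary injects at $v$ a stream of tied pairs, each pair consisting of one packet destined to $w$ and one destined to $w'$; because the two packets of a pair share their injection round and both have one-hop paths, every policy of the kind in the theorem (SIS, LIS, FTG, NTS, NTG, FFS, $\ldots$) assigns them equal priority, so the choice between them is precisely an arbitrary tie-break that I am free to fix. I then choose a periodic oracle $\cT\in\hT_h$ that, in alternating phases, furnishes $(v,w)$ its mandated collision-free slot while $u'$ jams $w'$ and the tie-break forces $v$ to attempt its $w'$-bound packet, and symmetrically furnishes $(v,w')$ its slot while $u$ jams $w$ and $v$ is forced to attempt its $w$-bound packet. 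In both phases the attempted transmission collides and fails under $\re$, while each jammer's single transmission delivers harmlessly to its private sink.

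The key invariant to verify is that no packet ever leaves $v$: whichever pair $\cS$ places on top is an unbroken tie, and the adversarial tie-break sends its currently-blocked member, so that transmission fails and both members are retained; as this holds in every round, the top pair never shrinks and lower pairs are never even attempted. Hence every injected pair persists forever, so with any positive injection rate the queue at $v$ grows without bound while $u,u'$ and the sinks stay bounded, yielding an unstable execution of $Prot_{\re}(\cT,\cS)$ and proving the theorem. I would present the gadget once and observe that it is independent of $\cS$ beyond the tie-hypothesis, so this single construction covers all the listed policies simultaneously.

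I expect the main obstacle to be showing that the chosen $\cT$ genuinely lies in $\hT_h$. The tension is intrinsic: $\hT_h$ \emph{forces} $(v,w)$ and $(v,w')$ to present a collision-free delivery slot every $h$ rounds, yet those are exactly the slots I must prevent $v$ from exploiting, which is why I divert $v$ onto the other, currently-jammed receiver rather than merely pausing it. Worse, jamming $w$ activates the parasitic edge $(u,w)$, whose hearing latency $\hT_h$ also constrains; since $v$ is adjacent to $w$ and transmits in every ``useful'' round, a collision-free slot for $(u,w)$ exists only when $v$ is silent, so the oracle's period must be padded with a few $v$-silent rounds dedicated to the jamming edges. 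Laying out a period of length $O(h)$ in which (i) each of $(v,w),(v,w'),(u,w),(u',w')$ receives its own collision-free slot and (ii) every round in which $v$ is active is a blocked round is the delicate combinatorial core; the attendant rate accounting -- the jammers alone consume injection rate about $1/h$ -- is what pins the unstable execution at the threshold rate allowed by the class $\hT_h$.
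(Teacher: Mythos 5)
Your core mechanism is exactly the paper's: place a tied pair of one-hop packets at a single node and let the adversarial tie-break always select the packet whose outgoing link is currently blocked, so that every forced transmission under $\re$ is wasted and nothing ever leaves the node. The paper's version is much leaner --- three nodes $u,v_1,v_2$, one tied pair, and an oracle in $\hT_h$ whose up-schedule for $(u,v_1)$ and $(u,v_2)$ simply alternates; crucially, the paper treats ``which links are up in which rounds'' as part of the worst-case oracle drawn from $\hT_h$ (the same abstraction it already uses in Lemma~\ref{lemma:SIS_stability_rng:onepacket}), so no injection budget is spent on creating the blocked rounds. Your choice to realize the blocking physically, via jammer nodes with private sinks, is more scrupulous about the radio semantics, but it carries a cost you yourself flag: the jammers must be fed packets roughly once every $h$ rounds to cover each round in which the choke node is active, so your construction pins the unstable execution at injection rate about $1/h$, whereas the theorem asserts instability \emph{regardless of the injection rate} (the paper's oracle-level construction gets this for free, since the stuck pair plus an arbitrarily sparse stream of further tied pairs suffices). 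If you adopt the paper's convention that the up/down schedule is the oracle's to choose rather than something you must manufacture with traffic, your gadget collapses to the paper's three-node example and the rate restriction disappears.
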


\begin{proof} 
Consider a scenario involving three nodes: $u$, $v_1$ and $v_2$, where $u$ is connected to both $v_1$ and~$v_2$. 
Let us inject two packets $p_1$ and $p_2$ at the same time into node $u$ so that $p_1$ is addressed to node $v_1$ and $p_2$ is addressed to node $v_2$. 
Assume that the link $(u,v_1)$ and the link $(u,v_2)$ are up alternately, so that the two links are never both up in the same round. 
Since both packets are injected at the same time into the same node and have one hop to travel, and 
since ties are arbitrarily broken, the scheduling policy can choose any one packet at any round, as far as they are both in node $u$. 
If the scheduling policy chooses $p_2$ when link  $(u,v_1)$ is up, and $p_1$ when link  $(u,v_2)$ is up, then no packet will be successfully transmitted in any round. 
\end{proof}

As a consequence of the previous theorem, a natural question is whether or not instability is due solely to the fact that ties are resolved arbitrarily at each round. In the following theorem,  we also show that instability can be provoked for SIS with $\hT_h$ and $\re$, regardless of how ties are resolved (as a matter of fact, in the proposed unstable scenario no ties need to be resolved).

\begin{figure*}
\centering
\includegraphics[width=\pagewidth]{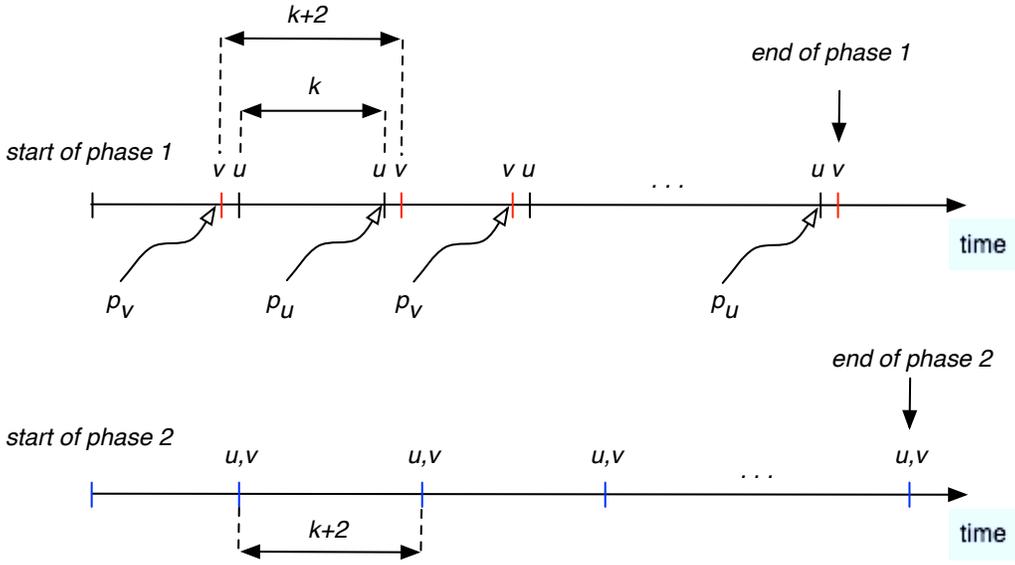}
\caption{A visualization of the execution used in the proof of Theorem~\ref{thm:SIS_inst}.}
\label{fig:SIS}
\end{figure*}

\begin{theorem}
\label{thm:SIS_inst}
SIS with $\hT_h$  and $\re$ is unstable against adversary $\cA(b,1/(2h -4))$, regardless of how ties are resolved, where $h \geq 4$.
\end{theorem}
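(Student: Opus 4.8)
The plan is to prove instability in the only way instability can be established here: by exhibiting a single bad witness. Since the stability notion quantifies universally over oracles, it suffices to produce one concrete network $G$, one oracle $\cT\in\hT_h$, and one admissible adversary $\cA(b,1/(2h-4))$ whose joint execution of $Prot_{\re}(\cT,\cS)$ has unbounded queues. So I would fix an explicit small gadget built around a bottleneck node $u$ with two competing neighbors, specify a \emph{periodic} oracle that grants each relevant link a collision-free slot once every $h$ rounds (hence $\cT\in\hT_h$) but \emph{staggers} the good slots of the two neighbors of $u$ so that at most one of them is serviceable in any round, and then drive the system with a phased injection pattern. All injections are placed at pairwise-distinct rounds, so SIS priorities are strict and the tie-breaking rule is never consulted; this is exactly the sense in which the instability holds ``regardless of how ties are resolved,'' and it strengthens Theorem~\ref{thm:break-ties-arbitrarily} (which relied on a genuine tie between two simultaneously injected packets) to a tie-free, strict-priority setting.

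The mechanism I would exploit is the defining weakness of $\re$ together with SIS being newest-first: the scheduler commits to the highest-priority (most recently injected) packet \emph{before} learning whether its intended neighbor's slot is good, and only after a failed transmission does it re-select. I would therefore have the adversary inject, just before each good slot of one neighbor, a \emph{fresh} packet addressed to the \emph{other} (currently non-serviceable) neighbor. Because that fresh packet is newest, SIS selects it, its transmission fails, and the good slot that an older waiting packet could have used is squandered. The technical heart is a per-phase \emph{waste lemma} asserting that, under this synchronization, $u$ completes at most one successful transmission per roughly $2h-4$ rounds instead of one per $h$: each usable slot is effectively paired with a wasted slot spent on a freshly injected wrong-neighbor packet, and the factor two together with the $-4$ correction (and the hypothesis $h\ge 4$) comes from the stagger offset and the boundary rounds at slot transitions. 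This is the step I expect to be the main obstacle, both because it must be argued inside a genuine distributed execution and because the arithmetic must land the degraded service time at exactly $2h-4$.

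For the growth itself I would run an induction over phases with the invariant that at the start of phase $j$ there is a backlog of $S_j$ packets queued at the bottleneck, all older than everything injected during the phase. Routing the packets injected during phase $j$ so that they reconverge onto the bottleneck link at the start of phase $j+1$, and choosing the phase length proportional to the current backlog, the comparison between the halved drain rate ($\approx$ one per $2h-4$ rounds) and the injection rate $1/(2h-4)$ is precisely break-even; the positive burstiness $b$ (and the strictly-newer reconverging packets) then tip it to $S_{j+1}>S_j$, yielding unbounded accumulation. The execution realizing this is what Figure~\ref{fig:SIS} is meant to visualize.

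Two admissibility checks must then be discharged. First, the oracle respects latency $h$: I must verify the maximal gap between consecutive good slots of every link is at most $h$, so $\cT\in\hT_h$. Second, the injections must satisfy $I(\tau,v)\le (1/(2h-4))\,|\tau|+b$ for every window $\tau$ and node $v$; here the delicate point—and the reason a naive single-bottleneck count fails—is that the injection budget must simultaneously fund the ``blocker'' packets that cause the waste and the ``backlog-filling'' packets that grow $S_j$. Reconciling these within the single rate $1/(2h-4)$ is only possible through the reconverging, multiplicative phase structure rather than a one-shot argument, and getting that bookkeeping to make the per-phase growth factor strictly exceed $1$ at exactly this rate (with the $h\ge4$ boundary) is where the real care lies.
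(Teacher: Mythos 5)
Your core mechanism is exactly the paper's: a bottleneck node with two neighbors whose good slots are staggered by the oracle, and an adversary that times each fresh injection so that, under $\re$, SIS commits to the newest packet --- addressed to the currently unserviceable neighbor --- and squanders the good slot of the other link. You also correctly identify why the result is tie-independent (all injections at pairwise-distinct rounds, so the tie-breaking rule is never consulted), which is precisely the paper's remark. So the witness gadget and the blocking idea are the same.

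Where your plan genuinely diverges --- and where it has a gap --- is the quantitative completion. Your ``waste lemma'' concedes one successful transmission per roughly $2h-4$ rounds and then tries to beat that drain rate with an injection rate of exactly $1/(2h-4)$, tipped by burstiness and a reconverging multi-phase structure. At exact break-even this cannot work: the constraint $I(\tau,v)\le r|\tau|+b$ holds for \emph{every} window, so the $+b$ is a one-time additive allowance, not a per-phase subsidy; if the server drains at rate $r$ whenever backlogged, queues stay bounded by roughly $b$. The reconvergence machinery you invoke is the wireline-style amplification, but it is neither specified nor obviously compatible with the rate being pinned at $1/(2h-4)$, and you yourself flag both the waste lemma and this bookkeeping as unresolved obstacles. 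The paper avoids the issue entirely by making the blocking \emph{total}: during an injection phase of length $2bk$ (with $k=h-2$) the interleaving of injections and link up-slots is arranged so that the newest packet is \emph{always} for the wrong link, hence \emph{zero} packets are served while $2b$ accumulate; a subsequent quiet phase in which both links come up simultaneously every $k+2$ rounds drains only $b$ of them (the node is single-port), leaving a net surplus of $b$ per iteration with no marginal rate comparison needed. I would redirect your argument toward that all-or-nothing schedule: prove that every good slot in the injection phase is wasted, and the growth accounting becomes trivial.
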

\begin{proof}

We consider a network topology in which there is a node with two outgoing links to nodes~$u$ and~$v$. 
Packets are injected directly into queues by an adversary $\cA(b,r)$. 
We consider an execution that consists of two phases.
This execution is represented in  Figure~\ref{fig:SIS}.
The phases are specified as follows.

In Phase 1, link $u$ is up each $k$ rounds, and link $v$ is up each $k-1$ and $k+1$ rounds alternately. 
We inject one packet at rounds $k-1, 2k, 3k-1, 4k, 5k-1,\ldots,2b \cdot k$ to traverse the links to $u$ and $v$ alternately, starting with the link to~$v$.
Let us assume that packets that need to traverse the link to~$v$   correspond to the injection-rate component of the adversary's type, and packets that need to traverse the link to~$u$  correspond to the burstiness component of the adversary's type. 

The adversary injects one packet corresponding to the injection-rate component each $2k$ rounds, until the $b$ packets representing the burstiness are injected.
A packet can be transmitted starting from the next round after it has been injected into a queue. 
Because SIS is the scheduling policy, during  Phase~$1$ no packet is transmitted, and at the end of Phase~$1$ there are $2b$  queued packets.

Phase~$2$ phase starts at the same round when Phase~$1$ ends. 
In this phase, both links are up at the same time each $k+2$ rounds for $b$ rounds, and no packet is injected. 
Therefore, at the end of Phase~$2$, some $b$ packets have been transmitted and some $b$ packets remain queued.
Moreover, the adversary can again inject a number of packets corresponding to the burstiness component of its type.

Observe that the latency of the links to $u$ and $v$ is $k+2$, which means that $h = k+2$. Furthermore, the actual injection rate is bounded by $1/2k$, that is, $1/(2h -4)$. 
This, for $h \geq 4$,  is lower or equal than $1/h$, and consequently fulfills the admissibility condition regarding the injection of packets.
At the end of Phase~$2$ we are in the same situation as at the begin of Phase~$1$, except that now $b$ packets remain queued. 
Therefore, we can iterate the same injection pattern to create instability.
\end{proof} 

On the contrary, LIS turns out to be universally stable when ties are resolved, rather than in an arbitrary manner at each round, on a \emph{permanent} one (even if this is arbitrary). We say that ties are resolved in a permanent manner if once a packet $p$ is assigned a higher priority than another one $p'$ at some node, then $p$ will permanently have a higher priority than $p'$ at that node.

\begin{theorem}
\label{thm:LIS_perm}
LIS with $\hT_h$ and $\re$ is universally stable, provided ties are resolved in a permanent manner. No queue contains more than $r \cdot ((b + r) \cdot h \cdot (d-1) +1) +b$  packets.
No packet spends more than $(b + r) \cdot h \cdot (d-1) +1$ rounds in the system.
\end{theorem}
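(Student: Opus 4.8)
The plan is to mirror the proactive analysis of Theorem~\ref{the:LIS} as closely as possible, since the queue-size and delay bounds claimed here are literally identical to those there. The entire proactive argument rests on a single hearing-control--dependent fact, used inside Lemma~\ref{lem:Td-minus-T0}: that the residence of a packet $p$ at the $i$-th node $v_i$ on its path satisfies $T_i \le T_{i-1} + (\text{number of higher-priority packets queued at } v_i)\cdot h$, i.e.\ that ``$h$ is a bound on the queue latency.'' Everything else --- the counting of (consecutive) active classes, the use of admissibility to bound the number of blocking packets by $r(a - T_{i-1}+T_0)+b$, the recurrence $T_i \le T_{i-1} + (r(a - T_{i-1}+T_0)+b)\,h$, and the final contradiction capping the number of active classes at $(b+r)h(d-1)+1$ --- depends only on the LIS priority structure and the adversary, not on whether the hearing control is $\pro$ or $\re$. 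So once I re-establish the queue-latency fact for $\re$ under permanent tie-breaking, the remainder transcribes verbatim from Theorem~\ref{the:LIS} and produces the same bounds.

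Accordingly, I would first isolate a new lemma: under LIS with permanent tie resolution, $\re$, and an oracle in $\hT_h$, the unique highest-priority packet at a node is transmitted within $h$ rounds, and more generally $p$ leaves $v_i$ within $h$ times the number of higher-priority packets ever queued at $v_i$ while $p$ waits. The reason permanence is essential is exactly the mechanism behind Theorem~\ref{thm:break-ties-arbitrarily}: LIS orders packets by injection class, the only ties are within a class, and permanence freezes the intra-class order at each node, so the relative priority of any two packets present at a node is time-invariant. Consequently, whenever $p$ is the highest-priority packet present at $v_i$, the scheduler selects $p$ --- and only $p$ --- every round, the node keeps attempting $p$ on its next-hop link, and by the $\hT_h$ guarantee that link succeeds within $h$ consecutive rounds, so $p$ departs. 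This is precisely what collapses under arbitrary ties, where the scheduler may keep switching its selection and squander every successful round, as in Theorems~\ref{thm:break-ties-arbitrarily} and~\ref{thm:SIS_inst}.

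The main obstacle is the one phenomenon that distinguishes $\re$ from $\pro$: a round is wasted when the node transmits its current top packet but that packet's link happens to be down. I would control this by a charging argument. Because $p$ departs $v_i$ at $T_i$ only by being selected, every higher-priority packet has already left $v_i$ by then, so there are exactly $m$ successful rounds in $[T_{i-1},T_i)$, one per higher-priority packet, where $m$ is bounded as in Lemma~\ref{lem:Td-minus-T0} via the active-class count and admissibility. Each remaining round I charge to the packet the node was attempting; using permanence (the top packet is well defined and the selection is consistent) together with the $\hT_h$ guarantee that each link is up at least once in every window of $h$ rounds, I would show that no packet absorbs more than $h-1$ wasted charges before its own link succeeds and it departs, so the total number of rounds in $[T_{i-1},T_i)$ is at most $m\cdot h$ --- exactly the bound Lemma~\ref{lem:Td-minus-T0} needs. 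The delicate point here is preemption: a higher-priority packet arriving mid-wait can displace the current top; I would dispose of it by observing that any such preempting packet is itself one of the $m$ packets already counted, so it contributes its own bounded charge rather than an unbounded delay.

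With the queue-latency fact in hand, the rest is mechanical transcription of the proactive proof. I would re-derive the recurrence for $T_i$, solve it to obtain the analog of Lemma~\ref{lem:Td-minus-T0}, namely $T_d - T_0 \le \frac{(r a + b)\,h\,(d-1)}{1+ r h (d-1)}$, and then run the contradiction of Theorem~\ref{the:LIS} to conclude that there are always at most $(b+r)h(d-1)+1$ active classes. From this, the stated queue-size bound $r((b+r)h(d-1)+1)+b$ and delay bound $(b+r)h(d-1)+1$ follow exactly as before, establishing universal stability of LIS with $\hT_h$ and $\re$ under permanent tie resolution.
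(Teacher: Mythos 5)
Your proposal takes essentially the same route as the paper: the paper's entire proof of this theorem is the observation that Equation~(\ref{LIS:ties}) from Lemma~\ref{lem:Td-minus-T0} remains valid under reactive hearing control once ties are resolved permanently, after which Lemma~\ref{lem:Td-minus-T0} and Theorem~\ref{the:LIS} are reused verbatim to give the same bounds. Your charging argument for re-establishing that per-hop recurrence (including the treatment of preemption by later-arriving higher-priority packets) is considerably more detailed than anything the paper offers for this step, so the proposal is, if anything, more complete than the published proof.
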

\begin{proof}
Contrary to what happens when ties are arbitrarily resolved at each round, the Equation~(\ref{LIS:ties}) in Lemma~\ref{lem:Td-minus-T0} remains valid when ties are resolved in a permanent manner\footnote{Observe that if ties are resolved arbitrarily at each round then the scenario described in Theorem~\ref{thm:break-ties-arbitrarily} can be used to increase the leaving times unboundedly.}. Therefore, in that case the results in Lemma~\ref{lem:Td-minus-T0} and in Theorem~\ref{the:LIS}  remain valid. 
\end{proof}

\section{Results for regular transmission oracles}
\label{sec:regular}

So far, we have obtained stability results assuming completely general oracles. Indeed, the fact that a scheduling policy $\cS$ with $\hT_h$ and $\cH$ was stable meant that for any $\cT \in \hT_h$,  protocol $Prot_{\cH}(\cT,\cS)$ is stable. However, maybe for some transmission oracles it could be possible to guarantee stability in originally unstable scenarios.


We denote as $\hT_h^{reg}$ the subclass of oracles in $\hT_h$ (which we call \emph{regular oracles}) in which each node is guaranteed to be able to transmit, using any arbitrary link, at least one packet each $h$ consecutive rounds. If $\cT \in \hT_h^{reg}$ then we say that $\cT$ provides a \emph{node hearing latency} $h$.

\begin{lemma}
\label{fact:in-setting}
If  $\cS$ with $\hT_h$ and $\cH$ is stable against adversary $\cA$ then $\cS$ with $\hT_h^{reg}$ and $\cH$ is stable against $\cA$.
\end{lemma}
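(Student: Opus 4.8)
The plan is to observe that the claim is a purely logical consequence of how stability with respect to an oracle class is \emph{defined}, together with the inclusion $\hT_h^{reg} \subseteq \hT_h$. Recall that the statement ``$\cS$ with $\hT_h$ and $\cH$ is stable against $\cA$'' is itself a universally quantified assertion: by definition it means that for \emph{every} oracle $\cT \in \hT_h$, the protocol $Prot_{\cH}(\cT,\cS)$ is stable against $\cA$. The target conclusion is the same assertion, but with the quantifier ranging over the smaller class $\hT_h^{reg}$.

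First I would record the inclusion $\hT_h^{reg} \subseteq \hT_h$, which holds directly from the definition of $\hT_h^{reg}$ as a subclass of $\hT_h$ (the regular oracles are precisely those members of $\hT_h$ that additionally provide a node hearing latency $h$). Then, assuming the hypothesis that $Prot_{\cH}(\cT,\cS)$ is stable against $\cA$ for all $\cT \in \hT_h$, I would fix an arbitrary $\cT' \in \hT_h^{reg}$, note that $\cT' \in \hT_h$ by the inclusion, and apply the hypothesis to conclude that $Prot_{\cH}(\cT',\cS)$ is stable against $\cA$. Since $\cT'$ was an arbitrary oracle in $\hT_h^{reg}$, this is exactly the statement ``$\cS$ with $\hT_h^{reg}$ and $\cH$ is stable against $\cA$.''

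There is no genuine obstacle here: the content is merely that a property holding for every element of a set also holds for every element of any subset. The only point worth stating explicitly is the direction of the implication — stability is inherited when one passes to a subclass of oracles, not the other way around. This is precisely why this easy direction lets every positive result established for the general class $\hT_h$ (such as the universal stability of SIS and LIS under $\pro$) transfer verbatim to regular oracles, while leaving open the genuinely interesting converse question of whether regularity can restore stability in scenarios that are unstable for general oracles.
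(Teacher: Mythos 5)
Your proof is correct and follows exactly the paper's argument: the claim reduces to the inclusion $\hT_h^{reg} \subseteq \hT_h$ together with the fact that stability over an oracle class is a universally quantified statement, so it is inherited by any subclass. Nothing further is needed.
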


\begin{proof} 
Any $\cT \in \hT_h^{reg}$ fulfills that $\cT \in \hT_h$. So, by our assumption, $\cS$ with $\hT_h^{reg}$ and $\cH$ will be  stable against~$\cA$.
\end{proof} 

\begin{lemma}
\label{fact:ack}
If  $\cS$ with $\hT_h^{reg}$ and $\pro$ is stable against an adversary~$\cA$ then $\cS$ with $\hT_{h}^{reg}$  and $\re$ is also stable against $\cA$.
\end{lemma}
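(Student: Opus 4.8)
The plan is to prove the implication directly by a \emph{replay} argument: given an arbitrary reactive execution that uses a regular oracle, I will construct a regular oracle under which the \emph{proactive} protocol reproduces that execution round by round, and then invoke the standing hypothesis that $\cS$ with $\hT_h^{reg}$ and $\pro$ is stable against every oracle in $\hT_h^{reg}$. Concretely, I would fix an arbitrary $\cT\in\hT_h^{reg}$, a network, and the adversary $\cA$, and let $\cD$ be the execution of $Prot_{\re}(\cT,\cS)$. Recall that in reactive mode the scheduler, each round, selects the globally highest-priority packet in a node's queue \emph{without} regard to which links are up, transmits it, and keeps it in the queue unless the chosen link happened to be up in $\cT$.

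From $\cD$ I would define a companion oracle $\cT'$, tailored to this execution. Here I read the regularity of $\cT$ (its \emph{node hearing latency} $h$) as guaranteeing, for every node $u$, a \emph{free round} within each window of $h$ consecutive rounds, namely a round in which all of $u$'s links are simultaneously up. The definition of $\cT'$ is: at node $u$ and round $t$, if $t$ is a free round of $\cT$ declare all of $u$'s links up in $\cT'$; otherwise, if $u$ transmits successfully on link $e$ in $\cD$ declare only $e$ up, and if $u$'s transmission fails or $u$ pauses declare no link of $u$ up. The central claim is that $Prot_{\pro}(\cT',\cS)$ against $\cA$ produces exactly the same sequence of queue states as $\cD$, proved by induction on rounds.

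The inductive step rests on a scheduling coincidence. Assuming the queues agree up to round $t$, the reactive scheduler picks the top-priority packet $p$, destined for some link $e$. If $e$ is up in $\cT$, then $p$ is delivered in $\cD$; by construction only $e$ is up in $\cT'$, so the proactive scheduler is restricted to packets routed over $e$, and since $p$ is top overall it is in particular top among those, so proactive selects and delivers the same $p$. If $e$ is down in $\cT$, then $\cD$ makes no progress and $p$ stays; by construction no link of $u$ is up in $\cT'$, so proactive pauses and $p$ likewise stays. Free rounds are an instance of the first case (all links up, $p$ delivered on both sides). Hence the two executions coincide at $t+1$, and $\cD$ is stable if and only if $Prot_{\pro}(\cT',\cS)$ is.

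The main obstacle is to guarantee that $\cT'$ is itself regular, i.e.\ lies in $\hT_h^{reg}$, since only then does the hypothesis apply to it. This is precisely why $\cT'$ is forced to keep \emph{all} links up in every free round of $\cT$, rather than merely the single link reactive happened to use: because $\cT$ provides node hearing latency $h$, each node has an all-links-up round within every window of $h$ rounds, and copying these rounds verbatim into $\cT'$ endows $\cT'$ with the same node hearing latency $h$. One must check that overriding the non-free rounds to a single up-link (or none) cannot destroy this property, and it cannot, since regularity demands only one all-up round per $h$-window and those are left intact. With $\cT'\in\hT_h^{reg}$ secured, the hypothesis gives that $Prot_{\pro}(\cT',\cS)$ is stable against $\cA$, and by the replay equivalence so is $\cD$. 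As $\cT$, the network, and $\cD$ were arbitrary, $\cS$ with $\hT_h^{reg}$ and $\re$ is stable against $\cA$, which is the claim.
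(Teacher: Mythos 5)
Your proposal is correct and is essentially the paper's own argument: both proofs replay the reactive execution $\cD$ under a tailored oracle $\cT'$ that permits transmission exactly where $\cD$ succeeded, show the proactive run with $\cT'$ reproduces the same transmissions and queue states, and then invoke the proactive-stability hypothesis. Your extra care in keeping the all-links-up rounds of $\cT$ intact so that $\cT'$ demonstrably stays in $\hT_h^{reg}$ (including when queues are empty) is a point the paper only asserts in passing, but it is the same construction.
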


\begin{proof}
Let us consider an arbitrary execution $\cD$ of protocol $Prot_{\re}(\cT,\cS)$ against  $\cA$, where $\cT \in \hT_{h}^{reg}$. 
Let us also consider an execution $\cD'$ of protocol $Prot_{\pro}(\cT',\cS)$ against the same adversarial packet injection as in $\cD$, but such that $\cT'$ indicates a node to transmit when such a node successfully transmits in $\cD$; this can also include the case where there are no queued packets. Observe that $\cT'$  provides a node hearing latency of~$h$, since it transmits at the same rounds when $\cT$ transmits. 

We have that nodes transmit the same packets at the same rounds both in $\cD$ and in~$\cD'$. Therefore, the execution $\cD$ is  stable because $\cD'$ is stable. 
\end{proof}

\begin{theorem}
\label{the:node}
SIS and LIS with $\hT_h^{reg}$ and $\re$  are universally stables, regardless of how ties are broken. 
\end{theorem}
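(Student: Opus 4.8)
The plan is to obtain this theorem purely by composing the three statements that immediately precede it, with no new combinatorial argument needed. The driving observation is that universal stability under proactive control has already been established for both policies over the full oracle class $\hT_h$ (Theorems~\ref{the:SIS} and~\ref{the:LIS}), and that those theorems were proved for arbitrary tie-breaking; the two intervening lemmas then let me first restrict the oracle class and afterwards swap the hearing control from $\pro$ to $\re$.

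First I would fix an arbitrary admissible adversary $\cA$ whose injection rate is at most $1/h$, and recall from Theorem~\ref{the:SIS} (respectively Theorem~\ref{the:LIS}) that SIS (respectively LIS) with $\hT_h$ and $\pro$ is stable against $\cA$, irrespective of how ties are resolved. Since $\hT_h^{reg}\subseteq\hT_h$ by the very definition of a regular oracle, Lemma~\ref{fact:in-setting} upgrades this to stability of the same policy with $\hT_h^{reg}$ and $\pro$ against $\cA$.

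Next I would invoke Lemma~\ref{fact:ack}, which transfers stability from $\pro$ to $\re$ precisely within the regular class: because SIS (respectively LIS) with $\hT_h^{reg}$ and $\pro$ is stable against $\cA$, the same policy with $\hT_h^{reg}$ and $\re$ is stable against $\cA$ as well. As $\cA$ ranged over all admissible adversaries of rate at most $1/h$, this is exactly universal stability for $\re$. The tie-breaking clause comes for free: the upstream theorems assume nothing about how ties are broken, and neither Lemma~\ref{fact:in-setting} nor the simulation in Lemma~\ref{fact:ack} alters the scheduler's choices, so the conclusion holds regardless of how ties are broken.

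There is essentially no technical obstacle once the earlier results are in hand; the only point I would treat with care is confirming that the reduction of Lemma~\ref{fact:ack} genuinely preserves the tie-breaking rule. Its argument builds, from a reactive execution $\cD$, a proactive companion $\cD'$ whose oracle $\cT'$ enables a node exactly when that node transmits successfully in $\cD$, and one must check that the scheduler can be made to select the identical packet in $\cD'$ at each such round, so that packet movements coincide and stability transfers. Since this matching succeeds for whatever tie-breaking was used in $\cD$, the composition delivers the claimed universal stability of both SIS and LIS with $\hT_h^{reg}$ and $\re$ under any tie-breaking.
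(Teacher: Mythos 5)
Your proposal matches the paper's proof exactly: the paper also obtains the theorem by composing Lemma~\ref{fact:in-setting} and Lemma~\ref{fact:ack} with Theorems~\ref{the:SIS} and~\ref{the:LIS}, in the same order. Your extra remark about verifying that the simulation in Lemma~\ref{fact:ack} preserves the scheduler's tie-breaking choices is a sensible point of care but does not change the argument.
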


\begin{proof}
We use Lemmas~\ref{fact:in-setting} and~\ref{fact:ack}, combined with Theorem~\ref{the:SIS} and Theorem~\ref{the:LIS}.
\end{proof}

\section{Conclusions}
\label{sec:conclusions}

In this article we have introduced a new model to study stability in multi-hop wireless networks in the framework of adversarial queueing. In such a model, a routing protocol consists of three components: a \emph{transmission policy}, a \emph{scheduling policy} to select the packet to transmit form a set of packets parked at a node, and a \emph{hearing control mechanism} to coordinate transmissions with scheduling. Furthermore, the injection of packets is delegated to an adversary that also specifies, for each packet, its complete itinerary.

For such a setting, we have proposed a definition of universal stability that takes into account not only the scheduling policies (as in the standard wireline adversarial model), but also the transmission policies.

\begin{table*}
\centering
\footnotesize
\begin{tabular}{|c||l|l|}\hline
\diaghead{\theadfont Diag ColumnmnHead II}%
{Scheduling\\Policy}{Hearing\\Control}&
\thead{Proactive}&\thead{Reactive}\\    \hline  \hline 
$\cS$ unstable in AQT & unstable & unstable \\   \hline
& & $\bullet$ unstable regardless of how ties are resolved  \\ 
$SIS$ & universally stable  & $\bullet$ universally stable with regular oracles, regardless of  \\
& & how ties are resolved  \\ \hline
&  & $\bullet$ unstable if ties are arbitrarily resolved at each round\\ 
$LIS$ & universally stable & $\bullet$ universally stable if ties are resolved in a permanent manner   \\ 
&  & $\bullet$ universally stable with regular oracles, regardless of \\
& &  how ties are resolved     \\ \hline
\end{tabular}
\caption{Summary of stability results in the multi-hop radio model.}
\label{tb:summary}
\end{table*}

It has been shown that any scheduling policy that is unstable in the classical wireline adversarial model~\cite{AndrewsAFLLK01,BorodinKRSW01} remains unstable in the multi-hop radio network model, even in scenarios free of interferences.

Furthermore, it has been also shown that both SIS and LIS are universally stables in the multi-hop radio network model, provided it is used a \emph{proactive} hearing control. In contrast, such scheduling policies turn out to be unstable when using a \emph{reactive} hearing control. However, the scheduling policy LIS can be enforced to be universally stable provided ties are resolved in a \emph{permanent} manner. Such a situation doesn't hold in the case of SIS, which remains unstable regardless of how ties are resolved. Finally, it has been also shown that all scheduling policies that are universally stable when using  a proactive hearing control (which include SIS and LIS), remain universally stable when using a reactive hearing control, provided \emph{regular} transmission policies are used. Table~\ref{tb:summary} summarizes the obtained results regarding stability in the multi-hop radio model proposed in this paper.

\bibliography{radio-route}

\bibliographystyle{abbrv}

\appendix

\section{Implemented transmission oracles}
\label{sec:transmissions}

In this paper, we have delegated the task of  implementing transmission policies to \emph{transmission oracles} that indicate each node whether or not to transmit in a round. This approach allowed us to abstract from specific transmission policies to consider the qualities of scheduling policies independently of transmission policies. In this section we will show how some of these oracles can be implemented by using transmission policies whose code can be imbedded in the code of the routing protocol (i.e., implemented transmissions). Rather than introducing new policies, we will focus on showing different approaches that can be used to implement them.

\paragraph{Work-conserving.}
Perhaps the simplest transmission policy is the one in which every node transmits at each round using any link (but only one packet per round) as far as  there is a packet ready to be transmitted. In Section~\ref{sec:general}, it has been called \emph{work-conserving} ($\WC$). This transmission policy, provided there are not interferences, has a hearing latency of one round. However, in scenarios with interferences, $\WC$ doesn't guarantee bounded hearing latencies, therefore making the system unstable. 

In the rest of the section, we introduce some implemented transmission policies that provide bounded hearing latencies in scenarios with interferences.

\paragraph{Round-robin.}

A simple transmission mechanism that provides bounded hearing latencies consists of using a token traveling along a logical ring that includes the whole set of nodes. When a node obtains the token, it is eligible to transmit, using any link, for a round. Such a transmission policy, which we call \emph{round-robin} $RR$, provides a hearing latency of $n$, where $n$ is the number of nodes in the system. However, for large values of $n$, this rate is very small and it is not a viable strategy.


\paragraph{Transmitters.}

\emph{Transmitters} are mathematical structures that have been used to guarantee successful data transmission within some bounded number of rounds~\cite{DBLP:conf/podc/ChlebusKR06}. Roughly speaking, a transmitter consists of a binary array so that each row indicates a given process when it can transmit (when 1 occurs) or not (when 0 occurs). Furthermore, transmitters guarantee that, for each row, it will happen that at some column a 1 will occur while only 0's occur in this column in the other rows. Therefore, a transmitter of length $m$ guarantees that each node $v$ will be able to successfully transmit using every link outgoing from $v$ within at most $m$ rounds (provided there is a packet pending to be transmitted).

\begin{figure}
\hspace*{4.5cm}\resizebox{0.4\textwidth}{!}{\includegraphics{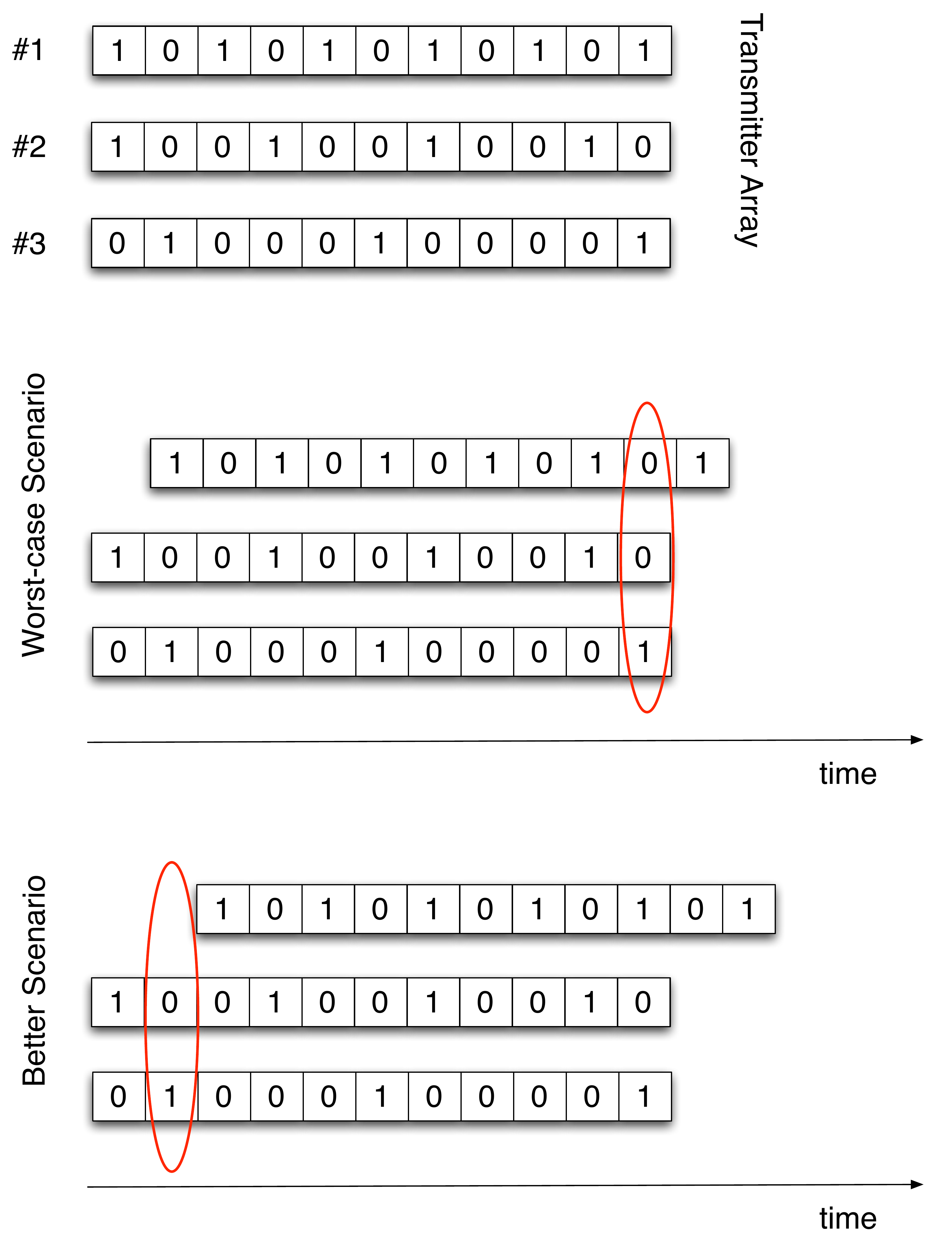}}
\caption{Example of a transmitter array of length $11$ for a system of 3 nodes. Node 3 wants to transmit.}
\label{fig:transmitter}
\end{figure} 

Figure~\ref{fig:transmitter} shows a transmitter array of length $11$ for a system of 3 nodes. The transmitter guarantees that, in the worst case, any node will be able to successfully transmit within the next $11$ rounds. In the worst-case scenario, the node~$3$ transmits after $11$ rounds; but in a better scenario, it transmits after $2$ rounds, and this is the main advantage of transmitters. 

In~\cite{DBLP:conf/podc/ChlebusKR06}, the authors propose a transmitter that provides a node hearing latency of $3 \; n^2 \; \lg n$, where $n$ is the number of nodes (Theorem~8). Also, in~\cite{DBLP:conf/podc/ChlebusK04} they propose a transmitter that provides a node hearing latency of $c  \; k^2 \; \mathrm{polylog} \;n$, where $n$ is the number of nodes, $k-1$ is the maximum node's degree and $c$ is a constant (Corollary~2).

\paragraph{Other implemented transmission oracles.}

In~\cite{FernandezAnta:2012:DRC:2109681.2109778} the authors provide an adaptive transmission policy that guarantees a hearing latency  of $O(k^2 \log k)$, where $k-1$ is the maximum degree of the network (i.e., the maximum number of nodes that can be directly accessed by some node in the system). They also survey several other transmission policies.

\end{document}